\tikzstyle{graph}=[>=stealth',semithick,inner sep=2pt,minimum width=6mm]
\tikzstyle{node}=[draw,circle,fill=white]
\tikzset{
    >=stealth',
}
\renewcommand{\emph}[1]{{\itshape #1}}
\def\orcidID#1{\smash{\href{http://orcid.org/#1}{\protect\raisebox{-1.25pt}{\protect\includegraphics{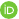}}}}}
\newcommand{\nb}[1]{\marginpar{\scriptsize #1}}
\newcommand\sidenote[1]{\nb{\textcolor{blue}{#1}}}
\newcommand{\todo}[1]{{\color{red} [TODO: #1]}}
\renewcommand{\todo}[1]{}
\renewcommand\sidenote[1]{}
\definecolor{extgreen}{rgb}{0.4,0.7,0.43}
\newcommand{\ext}[1]{{\color{extgreen} #1}}
\newcommand{\vamos}{\textsc{Vamos}\xspace}
\newcommand{\pe}{PE\xspace}
\newcommand{\mpe}{MPE\xspace}
\newcommand{\mpt}{MPT\xspace}
\newcommand{\dmpt}{DMPT\xspace}
\newcommand{\ksafety}{$k$-safety\xspace}
\definecolor{codegreen}{rgb}{0,0.6,0}
\definecolor{codegray}{rgb}{0.5,0.5,0.5}
\definecolor{codepurple}{rgb}{0.58,0,0.82}
\definecolor{backcolour}{rgb}{0.99,0.99,0.99}
\lstdefinestyle{mystyle}{
    backgroundcolor=\color{backcolour},
    commentstyle=\color{istagreen},
    keywordstyle=\color{istared},
    numberstyle=\footnotesize\color{codegray},
    stringstyle=\color{codepurple},
    basicstyle=\ttfamily\small,
    breakatwhitespace=false,
    breaklines=true,
    captionpos=b,
    keepspaces=true,
    numbers=left,
    numbersep=7pt,
    showspaces=false,
    showstringspaces=false,
    showtabs=false,
    tabsize=2,
    escapeinside={(@}{@)},
    frame=single
}
\definecolor{darkgreen}{RGB}{0,100,53}
\definecolor{istagreen}{RGB}{0,100,53}
\definecolor{istalightgreen}{RGB}{110,193,108} 
\definecolor{istalightgreen2}{RGB}{113,187,111}
\definecolor{istalightazure}{RGB}{218,238,239}
\definecolor{istablack}{RGB}{26,26,24}
\definecolor{istablue}{RGB}{37,59,144}
\definecolor{istaazure}{RGB}{0,154,163}
\definecolor{istalightorange}{RGB}{254,218,163}
\definecolor{istalightorange2}{RGB}{252,215,184}
\definecolor{istared}{RGB}{225,9,44}
\definecolor{istaorange}{RGB}{247,166,0}
\definecolor{istaorange2}{RGB}{244,152,0}
\renewcommand{\algocf@Vsline}[1]{
  \strut\par\nointerlineskip
  \algocf@bblockcode%
  \algocf@push{\skiprule}
  \hbox{{\color{gray!20}\vrule width 0.01pt}
    \vtop{\algocf@push{\skiptext}
      \vtop{\algocf@addskiptotal #1}}}
  \algocf@pop{\skiprule}
  \algocf@eblockcode%
}
\newcommand{\substr}[3]{{#1}[{#2}..{#3}]}
\newcommand{\suff}[2]{{#1}[{#2}..]}
\newcommand{\iter}[2]{{{#1}^\circledast{#2}}}
\newcommand{\lab}[2]{{({#2})_{#1}}}
\newcommand{\labm}[2]{{[{#2}]_{#1}}}
\newcommand{\cond}[1]{\textcolor{blue}{[#1]}}
\newcommand{\teval}[2]{[\![#2]\!]_{#1}}
\newcommand{\val}[1]{\mathit{#1}}
\newcommand{\mconc}{\odot}
\newcommand{\mcomp}{\mconc}
\newcommand{\ML}{\mathbb{M}_L}
\newcommand{\step}[2]{\xRightarrow{#1,#2}}
\newcommand{\E}{\mathbb{E}}
\newcommand{\rname}[1]{\textsc{~(\textcolor{istablue}{#1})}}
\newcommand{\rnamet}[1]{\textsc{\textcolor{istablue}{#1}}\xspace}
\begin{document}

\title{Monitoring Hyperproperties With\\Prefix Transducers}
\titlerunning{Monitoring Hyperproperties With Prefix Transducers}
\author{Marek Chalupa\orcidID{0000-0003-1132-5516} \and
Thomas A. Henzinger\orcidID{0000-0002-2985-7724}
}
\authorrunning{M. Chalupa, T. A. Henzinger}
\institute{Institute of Science and Technology Austria (ISTA), Klosterneuburg, Austria}

\maketitle

\begin{abstract}

Hyperproperties are properties that relate multiple execution traces.
Previous work on monitoring hyperproperties focused on synchronous hyperproperties, usually specified in HyperLTL.
When monitoring synchronous hyperproperties, all traces are assumed to proceed at the same speed.
We introduce (multi-trace) \emph{prefix transducers} and show how to use them for monitoring synchronous as well as,
for the first time, asynchronous hyperproperties.
Prefix transducers map multiple input traces into one or more output traces
by incrementally matching prefixes of the input traces against expressions similar to regular expressions.
The prefixes of different traces which are consumed by a single matching step of the monitor may have different lengths.
The deterministic and executable nature of prefix transducers makes them more suitable as an intermediate formalism
for runtime verification than logical specifications,
which tend to be highly non-deterministic, especially in the case of asynchronous hyperproperties.
We report on a set of experiments about monitoring asynchronous version of
observational determinism.

\end{abstract}

\section{Introduction}

Hyperproperties~\cite{ClarksonS10} are properties that relate multiple execution traces of a system to each other.
One of the most prominent examples of hyperproperties nowadays are the information-flow security policies~\cite{McLean90}.
Runtime monitoring~\cite{Bartocci18} is a lightweight formal method for analyzing the behavior of a system by checking dynamic
execution traces against a specification.
For hyperproperties, a monitor must check relations between multiple traces.
While many hyperproperties cannot be monitored in general~\cite{AgrawalB16,BrettSB17,BonakdarpourSS18,FinkbeinerHST19,StuckiSSB19},
the monitoring of hyperproperties can still yield useful results, as we may detect their violations~\cite{FinkbeinerHST18}.

Previous work on monitoring hyperproperties focused on HyperLTL specifications~\cite{AgrawalB16,BrettSB17}, or other synchronous
hyperlogics~\cite{AcetoAAF22}.
Synchronous specifications model processes that progress at the same speed in lockstep, one event on each trace per step.
The synchronous time model has been found overly restrictive for specifying hyperproperties of asynchronous processes,
which may proceed at varying speeds~\cite{BartocciFHNC22,BaumeisterCBFS21,BozzelliPS21,GutsfeldMO21}.
A more general, asynchronous time model allows multiple traces to proceed at different speeds, independently of each other,
in order to wait for each other only at certain synchronization events.
As far as we know, there has been no previous work on the runtime monitoring of asynchronous hyperproperties.

The important class of $k$-safety hyperproperties~\cite{ClarksonS10,FinkbeinerHT19} can be monitored by processing $k$-tuples of traces~\cite{AgrawalB16,FinkbeinerHT19}.
In this work, we develop and evaluate a framework for monitoring \ksafety hyperproperties under both, synchronous and asynchronous time models.
For this purpose, we introduce \emph{(multi-trace) prefix transducers}, which map multiple (but fixed) input traces into one or more output traces
by incrementally matching prefixes of the input traces against expressions similar to regular expressions.
The prefixes of different traces which are consumed by a single matching step of the transducer may have different lengths, which allows to proceed on the traces asynchronously.
By instantiating prefix transducers for different combinations of input traces,
we can monitor $k$-safety hyperproperties instead of monitoring only a set of fixed input traces.
The deterministic and executable nature of prefix transducers gives rise to natural monitors. This is in contrast with monitors synthesized
from logical specifications which are often highly non-deterministic, especially in the case of asynchronous (hyper)properties.

\newcommand{\num}[1]{\textcolor{istaazure}{#1}}

\begin{figure}[t]
\begin{center}
\begin{tikzpicture}[
    nd/.style = {align=left},
    ev/.style = {align=left, rounded corners, scale=0.9,
                 fill=gray!20, minimum width=1cm}
  ]
\node[nd] at (-1.0, 0) { \strut\\$t_1:$\\\strut };
\node[ev] at (0,   0) { \strut \texttt{I(l, \num{1})}};
\node[ev] at (1.8, 0) { \strut \texttt{I(h, \num{1})}};
\node[ev] at (3.6, 0) { \strut \texttt{O(l, \num{1})}};
\node[ev] at (5.4, 0) { \strut \texttt{O(l, \num{1})}};

\node[nd] at (-1.0, -2em) { \strut\\$t_2:$\\\strut };
\node[ev] at (0,    -2em) { \strut \texttt{I(l, \num{1})}};
\node[ev] at (1.8,  -2em) { \strut \texttt{I(h, \num{2})}};
\node[ev] at (3.6,  -2em) { \strut \texttt{O(l, \num{1})}};
\node[ev] at (5.4,  -2em) { \strut \texttt{O(l, \num{1})}};


\node[nd] at (-1.0, -4em) { \strut\\$t_3:$\\\strut };
\node[ev] at (0,    -4em) { \strut \texttt{I(l, \num{1})}};
\node[ev] at (1.8,  -4em) { \strut \texttt{Dbg(\num{1})}};
\node[ev] at (3.6,  -4em) { \strut \texttt{I(h, \num{2})}};
\node[ev] at (5.4,  -4em) { \strut \texttt{O(l, \num{1})}};
\node[ev] at (7.2,  -4em) { \strut \texttt{O(l, \num{1})}};

\end{tikzpicture}
\caption{Traces of abstract events.
  The event \texttt{I(x, \num{v})} signals an input of value \num{v} into variable \texttt{x};
  and \texttt{O(x, \num{v})} signals an output of value \num{v} from variable \texttt{x}.
  The event \texttt{Dbg(\num{b})} indicates whether the debugging mode is turned on or off.}
\label{fig:traces}
\end{center}
\end{figure}
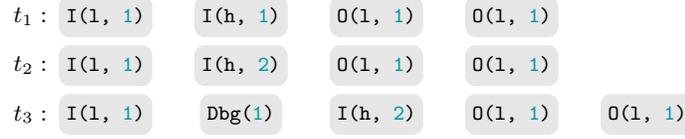

We illustrate prefix transducers on the classical example of \emph{observational determinism}~\cite{ZdancewicM03,HuismanWS06}.
Informally, observational determinism (OD) states that whenever two execution traces agree on \emph{low} (publicly visible) inputs,
they must agree also on low outputs,
thus not leaking any information about \emph{high} (secret) inputs.
Consider, for example, the traces $t_1$ and $t_2$ in Figure~\ref{fig:traces}.
These two traces satisfy OD, because they have the same low inputs (events \texttt{I(l, $\cdot$)})
and produce the same low outputs (events \texttt{O(l, $\cdot$)}).
All other events in the traces are irrelevant to OD.
The two traces even satisfy synchronous OD, as the input and output events appear at the same positions in both traces,
and thus they can be analysed by synchronous-time monitors
(such as those based on HyperLTL~\cite{AgrawalB16,BrettSB17,FinkbeinerHST18,PinisettySS18,FinkbeinerHST20}),
or by the following prefix transducer:

\begin{center}
  \begin{tikzpicture}[
    q/.style = {scale=1, draw, circle, minimum width=7mm, align=left},
    lab-edge/.style = {scale=0.8, align=left}
    ]
  \node[q] (0) at (0,0) {$q_{0}$};
  \node[q] (1) at (5,0) {$q_{1}$};
  \node[q] (2) at (-5,0) {$q_{2}$};

  \draw[->] ($(0.south) + (0, -3mm)$) to (0);
  \draw[->] (0) to[loop] node[lab-edge, yshift=10mm, xshift=9mm]
    {
      \begin{tabular}{l}
        $\tau_1: \iter{\_}{\lab{e_1}{I_l + O_l}}$\\
        $\tau_2: \iter{\_}{\lab{e_2}{I_l + O_l}}$\\
      \midrule
      \multicolumn{1}{c}{\cond{$\tau_1[e_1] = \tau_2[e_2]$}}
      \end{tabular}
      $\leadsto \tau_o \mapsto \top$
    }
    (0);
  \draw[->] (0) to node[lab-edge, yshift=10mm, xshift=18mm]
    {
      \begin{tabular}{l}
        $\tau_1: \iter{\_}{\lab{e_1}{O_l}}$\\
        $\tau_2: \iter{\_}{\lab{e_2}{O_l}}$\\
      \midrule
      \multicolumn{1}{c}{\cond{$\tau_1[e_1] \not = \tau_2[e_2]$}}
      \end{tabular}
      $\leadsto \tau_o \mapsto \bot$
    }
    (1);
  \draw[->] (0) to node[lab-edge, yshift=10mm, xshift=0mm]
    {
      \begin{tabular}{l}
        $\tau_1: \iter{\_}{\lab{e_1}{I_l}}$\\
        $\tau_2: \iter{\_}{\lab{e_2}{I_l}}$\\
      \midrule
      \multicolumn{1}{c}{\cond{$\tau_1[e_1] \not = \tau_2[e_2]$}}
      \end{tabular}
      $\leadsto \tau_o \mapsto \top$
    }
    (2);

\end{tikzpicture}
\end{center}

The transducer reads two traces $\tau_1$ and $\tau_2$ that are instantiated with actual traces, e.g.,~$t_1$ and $t_2$.
It starts in the initial state $q_0$ and either repeatedly takes the self-loop transition, or goes into one of the states $q_1$
or $q_2$ where it gets stuck.
The self-loop transition matches the shortest prefix of $\tau_1$ that contains any events until a low input or output is found.
This is represented by the \emph{prefix expression}
$\iter{\_}{{(I_l + O_l)}}$, where we use $I_l$ (resp.\ $O_l$) to represent any low input (resp.\ output) event,
and $\_$ stands for any event that does not match the right-hand side of $\iter{}{}$.
The same pattern is independently matched by this transition also against the prefix of $\tau_2$.
Moreover, the low input or output events found on traces $\tau_1$ and $\tau_2$ are labeled by $e_1$ and $e_2$, resp.
The self-loop transition is taken if the prefixes of $\tau_1$ and $\tau_2$ match the expressions and, additionally, the
condition
\textcolor{blue}{$\tau_1[e_1] = \tau_2[e_2]$} is fulfilled.
The term $\tau[e]$ denotes the sequence of events in trace $\tau$ on the position(s) labeled by $e$.
Therefore, the condition
\textcolor{blue}{$\tau_1[e_1] = \tau_2[e_2]$} asserts that the matched input
or output events must be the same (both in terms of type and values).
If the transducer takes the self-loop transition, it outputs (appends) the symbol $\top$ to the output trace $\tau_o$ (as stated by the right-hand side of $\leadsto$).
Then, the matched prefixes are consumed from the input traces and the transducer continues with matching the rest of the traces.
The other two edges are processed analogously.

It is not hard to see that the transducer decides if OD holds for two \emph{synchronous} input traces.
State $q_0$ represents the situation when OD holds but may still be violated in the future.
If the self-loop transition over $q_0$ cannot be taken, then (since we now assume synchronised traces), the matched prefixes must end either with different low input or different low output events.
In the first case, OD is satisfied by the two traces and the transducer goes to state $q_2$ where it gets stuck (we could, of course, make the transducer total to avoid getting stuck, but in this example we are interested only in its output).
In the second case, OD is violated and before the transducer changes the state to $q_1$,
it appends $\bot$ to the output trace $\tau_o$.
OD is satisfied if $\tau_o$ does not contain (end with) $\bot$ after finishing reading (or getting stuck on) the input traces.

The transducer above works also for monitoring asynchronous OD, where the low input and output events are misaligned by ``padding'' events (but it requires that there is the same number and order of low input and output events on the input traces -- they are just misaligned and possibly carry different values; the general setup where the traces can be arbitrary is discussed in Section~\ref{sec:experiments}).
The transducer works for asynchronous OD because the prefix expressions for $\tau_1$ and $\tau_2$ are matched independently, and thus they can match prefixes of different lengths.
For example, for $\tau_1 = t_1$ and $\tau_2 = t_3$, the run consumes the traces in the following steps:

\begin{center}
\begin{tikzpicture}[
    nd/.style = {align=left},
    ev/.style = {align=left, rounded corners, scale=0.9,
                 fill=gray!20, minimum width=1cm}
  ]
\node[nd] at (-1.2, 0) { \strut\\$t_1:$\\\strut };
\node[ev] at (0,   0) { \strut \texttt{I(l, \num{1})}};
\node[ev] at (3.6, 0) { \strut \texttt{I(h, \num{1})}};
\node[ev] at (5.4, 0) { \strut \texttt{O(l, \num{1})}};
\node[ev] at (7.2, 0) { \strut \texttt{O(l, \num{1})}};

\node[nd] at (-1.2, -2em) { \strut\\$t_3:$\\\strut };
\node[ev] at (0,    -2em) { \strut \texttt{I(l, \num{1})}};
\node[ev] at (1.8,  -2em) { \strut \texttt{Dbg(\num{1})}};
\node[ev] at (3.6,  -2em) { \strut \texttt{I(h, \num{2})}};
\node[ev] at (5.4,  -2em) { \strut \texttt{O(l, \num{1})}};
\node[ev] at (7.2,  -2em) { \strut \texttt{O(l, \num{1})}};

\draw[] (0.95, 2em) to (0.95, -4em);
\draw[] (6.30, 2em) to (6.30, -4em);
\draw[] (8.10, 2em) to (8.10, -4em);

\node[] at (0,   2.5em) { \strut \textit{step 1}};
\node[] at (3.6, 2.5em) { \strut \textit{step 2}};
\node[] at (7.2, 2.5em) { \strut \textit{step 3}};

\end{tikzpicture}
\end{center}

Hitherto, we have used the output of prefix transducers to decide OD for the given two traces, i.e.,~to perform the monitoring task.
We can also define a prefix transducer that, instead of monitoring OD for the two traces,
transforms the asynchronous traces $\tau_1$ and $\tau_2$ into
a pair of synchronous traces $\tau'_1$ and $\tau'_2$ by filtering out ``padding'' events:

\begin{center}
\begin{tikzpicture} 
  \node[draw,circle, minimum width=6mm] (0) at (0,0) {$q_0$};

  \draw[->] ($(0.north) + (0, 5mm)$) to (0);
  \draw[->] (0) to[loop left] node[yshift=0mm,align=left]
    {
      \begin{tabular}{l}
        $\tau_1: \iter{\_}{\lab{e_1}{I_l + O_l}}$\\
      \end{tabular}
      $\leadsto$
      \begin{tabular}{l}
        $\tau'_1 \mapsto \tau_1[e_1]$\\
      \end{tabular}
    } (0);
  \draw[->] (0) to[loop right] node[yshift=0mm,align=left]
    {
      \begin{tabular}{l}
        $\tau_2: \iter{\_}{\lab{e_2}{I_l + O_l}}$\\
      \end{tabular}
      $\leadsto$
      \begin{tabular}{l}
        $\tau'_2 \mapsto \tau_2[e_2]$\\
      \end{tabular}
    } (0);

\end{tikzpicture}
\end{center}

In this example, the transducer appends every event labeled by $e_i$
to the output trace $\tau'_i$, and so it filters out all events except low inputs and outputs.
It reads and filters both of the input traces independently of each other.
The output traces from the transducer can then be forwarded to, e.g., a HyperLTL monitor\footnote{
 One more step is needed before we can use a HyperLTL monitor, namely, to transform
 the trace of abstract events from the example into a trace of sets of atomic propositions.
 This can be also done by a prefix transducer.
}.
\ext{\todo{compare to monitor automata (and maybe other automata models) -- in a new section?}}


\subsubsection{Contributions} This paper makes the following contributions:

\begin{itemize}
  \item We introduce multi-trace prefix expressions and transducers (Section~\ref{sec:pe} and Section~\ref{sec:mpt}).
    These are formalisms that can efficiently and incrementally process words (traces) either fully synchronously, or asynchronously with synchronization points.
  \item We suggest that prefix transducers are a natural formalism for specifying
    many synchronous and asynchronous $k$-safety hyperproperties,
    such as observational determinism.
  \item We design an algorithm for monitoring synchronous and asynchronous $k$-safety hyperproperties using prefix transducers (Section~\ref{sec:hyper}).
  \item We provide some experiments to show how our monitors perform (Section~\ref{sec:experiments}).
\end{itemize}

 \label{sec:intro}

\sidenote{
  "Variants of regular expressions have been used for a long time... special thing about ours expressions is the shortest prefix thing"
  - formal definition of k-safety property
  and say what all combinations of ...
  - say all the additional work that can be done
  - say that we could discard some traces or use some search strategy
}

\section{Prefix expressions}\label{sec:pe}
In this section, we define \emph{prefix expressions} -- a formalism
similar to regular expressions
designed to deterministically and unambiguously match prefixes of words.

\subsection{Preliminaries}

We model sequences of events as \emph{words} over finite non-empty alphabets.
\sidenote{In some examples it is not clear that the alphabet is finite}%
Given an alphabet $\Sigma$, the set of finite words over this alphabet is denoted as $\Sigma^*$. 
For two words $u = u_0...u_l\in\Sigma^*_1$ and $v=v_0...v_m\in\Sigma^*_2$,
their concatenation $u\cdot v$, also written $uv$ if there is no confusion,
is the word $u_0...u_lv_0...v_m \in (\Sigma_1 \cup \Sigma_2)^*$.
If $w = uv$, we say that $u$ is a prefix of $w$, written $u \le w$, and $v$
is a suffix of $w$. If $u \le w$ and $u \not = w$, we say that $u$ is a 
proper prefix of $w$.

For a word $w = w_0...w_{k-1} \in \Sigma^*$, we denote $|w| = k$
its length, $w[i] = w_i$ for $0 \le i < k$ its $i$-th element,
$\substr{w}{s}{e} = w_sw_{s+1}...w_e$ the sub-word beginning at index $s$
and ending at index $e$, and $\suff{w}{s} = w_sw_{s+1}...w_{k-1}$ its suffix
starting at index $s$.

Given a function $f: A \rightarrow B$, we denote $Dom(f) = A$ its domain.
Partial functions with domain $A$ and codomain $B$ are written as $A \hookrightarrow B$.
Functions with a small domain are sometimes given extensionally by listing
the mapping, e.g., $\{x \mapsto 1, y \mapsto 2\}$.
Given a function $f$, $f[x \mapsto c]$ is the function that coincides with $f$ on all elements except on $x$ where it is $c$.

\subsection{Syntax of prefix expressions}

Let $L$ be a non-empty set of labels (names) and $\Sigma$ a finite non-empty alphabet.
The syntax of \emph{prefix expressions (\pe)} is defined by the
following grammar:

\begin{equation*}
\begin{split}
  \alpha ::= &~\epsilon \mid a\mid (\alpha.\alpha) \mid (\alpha + \alpha) \mid (\iter{\alpha}{\beta})
               \mid \lab{l}{\alpha}\\
  \beta ::=  &~a\mid (\beta + \beta) \mid \lab{l}{\beta}
\end{split}
\end{equation*}

\noindent
where $a \in \Sigma$ and $l \in L$.
Many parenthesis can be elided if we let '$\iter{}{}$' (iteration) take precedence before
'.' (concatenation), which takes precedence before '+' (disjunction, plus).
We write $a . b$ as $ab$ where there is no confusion.
In the rest of the paper, we assume that a set of labels $L$ is implicitly given,
and that $L$ always has ,,enough'' labels.
We denote the set of all prefix expressions over the alphabet $\Sigma$
(and any set of labels $L$) as $\pe(\Sigma)$, and  $\pe(\Sigma, L)$ if we want to
stress that the expressions use labels from $L$.

The semantics of {\pe}s (defined later) is similar to the semantics of regular
expressions with the difference that a \pe is not matched against a whole word
but it matches only its prefix, and this prefix is the shortest one possible (and non-empty -- if not explicitly specified).
For this reason, we do not use the classical Kleene's iteration as it would
introduce ambiguity in matching.
For instance, the regular expression $a^*$ matches all the prefixes
of the word $aaa$. And even if we specify that we should pick the shortest one,
the result would be $\epsilon$, which is usually not desirable, because that means no progress in reading the word.
Picking the shortest non-empty prefix would be a reasonable solution in many cases,
but the problem with ambiguity persists. For example, the regular expression
$(ab)^*(a + b)^*$ matches the word $ab$ in two different ways, which introduces
non-determinism in the process of associating the labels with the matched positions.

To avoid the problems with Kleene's iteration, we use a binary iteration operator
that is similar to the \emph{until} operator in LTL in that it requires
some letter to appear eventually.
The expression $\iter{\alpha}{\beta}$ could be roughly defined as
$\beta + {\alpha}{\beta} + {\alpha}^2{\beta} +
\dots$ where we evaluate it from left to right and
$\beta$ must match \emph{exactly} one letter.
The restriction on $\beta$ is important to tackle ambiguity,
but it also helps efficiently evaluate the expression -- it is enough to look at
a single letter to decide whether to continue matching whatever follows
the iteration, or whether to match the left-hand side of the expression.
Allowing $\beta$ to match a sequence of letters would require
a look-ahead or backtracking and it is a subject of future extensions.
With our iteration operator, expressions like $\iter{(ab)}{\iter{(a + b)}{}}$
and $\iter{a}{}$ are malformed and forbidden already on the level of syntax.

Sub-expressions of a \pe can be labeled and the matching procedure
described later returns a list of positions in the word that were matched
for each of the labels.
We assume that every label is used maximally once, that is, no two
sub-expressions have the same label.
Labels in {\pe}s are useful for identifying the sub-word that matched particular
sub-expressions, which will be important in the next
section when we use logical formulae that relate sub-words from
\emph{different} words (traces).
Two examples of {\pe}s and their informal evaluation are:
\begin{itemize}

  \item $\iter{\lab{l}{a+b}}{a}$ -- match $a$ or $b$, associating them to
  $l$, until you see $a$. Because whenever the word contains $a$,
  the right-hand side of the iteration matches,
  the left part of the iteration never matches $a$ and $a$ is redundant in the
  left-hand side sub-expression.
  For the word $bbbaba$, the expression matches the prefix $bbba$
  and $l$ is associated with the list of position ranges $(0,0), (1,1), (2,2)$ that
  corresponds to the positions of $b$ that were matched by the sub-expression
  $(a+b)$.

  \item $\lab{l_1}{\iter{a}{b}}\lab{l_2}{\iter{(b+c)}{(a+d)}}$ -- match $a$
  until $b$ is met and call this part $l_1$; then match $b$ or $c$ until $a$ or
  $d$ and call that part $l_2$. For the word $aabbbada$, the expression
  matches the prefix $aabbba$. The label $l_1$ is associated with the range of
  positions $(0,2)$ containing the sub-word $aab$,
  and $l_2$ with the range~$(3,5)$ containing the sub-word $bba$.

  \end{itemize}

\subsection{Semantics of prefix expressions}

We first define \emph{m-strings} before we get to the formal semantics of {\pe}s.
An m-string is a sequence of pairs of numbers or a special symbol $\bot$.
Intuitively, $(p, \bot)$ represents the beginning of a match at position $p$ in
the analyzed word, $(\bot, p)$ the end of a match at position $p$, and
$(s, e)$ is a match on positions from $s$ to $e$.
The concatenation of m-strings reflects opening and closing the matches.

\begin{definition}[M-strings]
\emph{M-strings} are words over the alphabet
$M=(\mathbb{N} \cup \{\bot\})\times(\mathbb{N} \cup \{\bot\})$
with the partial concatenation function
$\mconc: M^* \times M \hookrightarrow M^*$ defined as

\[
\alpha\mconc(c, d) =
\begin{cases}
(c,d)             & \text{if } \alpha = \epsilon \land c \not = \bot \\
\alpha\cdot(c,d)  & \text{if } \alpha = \alpha'\cdot(a, b) \land b \not = \bot\\
\alpha'\cdot(a,d) & \text{if } \alpha = \alpha'\cdot(a, b) \land b = \bot \land c = \bot\\
\alpha'\cdot(c,d) & \text{if } \alpha = \alpha'\cdot(a, b) \land b = \bot \land c \not = \bot
\end{cases}
\]
\end{definition}

Every m-string is built up
from the empty string $\epsilon$ by repeatedly concatenating $(p, \bot)$ or
$(\bot, p)$ or $(p, p)$.
The concatenation $\mconc$ is only a partial function (e.g., $\epsilon \mconc
(\bot, \bot)$ is undefined), but we will use only the defined fragment of the
domain.
It works as standard concatenation if the last match was closed (e.g.,~$(0,
4)\mconc(7, \bot) = (0, 4)(7, \bot)$), but it overwrites the last match if we
start a new match without closing the old one (e.g.,~$(0, \bot)\mconc(2, \bot)
= (2, \bot)$). Overwriting the last match in this situation is valid, because
labels are assumed to be unique and opening a new match before the last match
was closed means that the old match failed.

We extend $\mconc$ to work with m-strings on the right-hand side:
\[
\alpha \mconc w =
\begin{cases}
  \alpha & \text{if } w = \epsilon\\
  (\alpha \mconc x) \mconc w' & \text{if } w = x \cdot w'
\end{cases}
\]

While evaluating a \pe, we keep one m-string per label used in the \pe. To do so
we use \emph{m-maps}, partial mappings from labels to m-strings.

\begin{definition}[M-map]
  Let $L$ be a set of labels and
  $M=(\mathbb{N} \cup \{\bot\})\times(\mathbb{N} \cup \{\bot\})$
  be the alphabet of m-strings. An \emph{m-map} is a partial function
  $m: L \hookrightarrow M^*$. Given two
  m-maps $m_1, m_2: L \hookrightarrow M^*$, we define their concatenation
  $m_1\mcomp m_2$ by
\[
(m_1\mcomp m_2)(l) = \sigma_1 \mconc \sigma_2
\]
for all $l\in Dom(m_1) \cup Dom(m_2)$ where $\sigma_i = \epsilon$ if $m_i(l)$
is not defined and $\sigma_i = m_i(l)$ otherwise.
\end{definition}

\noindent
We denote the set of all
m-maps $L \hookrightarrow M^*$ for the set of labels $L$ as $\ML$.

\smallskip
The evaluation of a \pe over a word $w$ is defined as an iterative application
of a set of rewriting rules that are inspired by language
derivatives~\cite{Brzozowski64,Antimirov96}. For the purposes of evaluation, we
introduce two new prefix expressions: $\bot$%
\footnote{Because {\pe}s and m-strings never interact together,
we use the symbol $\bot$ in both, but formally they are different symbols.}
and $\labm{l}{\alpha}$. \pe $\bot$ represents a failed match and
$\labm{l}{\alpha}$ is a ongoing match of a labeled sub-expressions
$\lab{l}{\alpha}$. Further, for a \pe $\alpha$, we define inductively that
$\epsilon \in \alpha$ iff
\begin{itemize}
  \item[-] $\alpha = \epsilon$, or
  \item[-] $\alpha$ is one of $(\alpha_0 + \alpha_1)$ or
           $\lab{l}{\alpha_0 + \alpha_1}$ or
           $\labm{l}{\alpha_0 + \alpha_1}$ and it holds that
$(\epsilon \in  \alpha_0 \lor \epsilon \in \alpha_1)$.
\end{itemize}

For the rest of this section, let us fix a set of labels $L$, an
alphabet $\Sigma$, and denote $\E = \pe(\Sigma, L)$ the set of all prefix
expressions over this alphabet and this set of labels.
The core of evaluation of {\pe}s is the \emph{one-step
relation} $\step{a}{p} \subseteq (\E \times \ML) \times (\E \times
\ML)$ defined for each letter $a$ and natural number $p$,
whose defining rules are depicted in Figure~\ref{fig:one-step-rules}.
We assume that the rules are evaluated modulo the equalities $\epsilon\cdot\alpha = \alpha\cdot\epsilon = \alpha$,
and we say that $\alpha'$
is a \emph{derivation} of $\alpha$ if $\alpha \step{a}{p} \alpha'$ for some $a$ and
$p$.

\begin{figure}[th!]
{
\begin{align*}
  \inferrule{~}{(\epsilon, M) \step{a}{p} (\bot, \emptyset)}\rname{Eps}&&
\inferrule{~}{(\bot, M) \step{a}{p} (\bot, \emptyset)}\rname{Bot}&&
\inferrule{~}{(a, M) \step{a}{p} (\epsilon, M)}\rname{Ltr}
\end{align*}
\begin{align*}
\inferrule{a \not = b}{(a, M) \step{b}{p} (\bot, \emptyset)}\rname{Ltr-fail}&&
\inferrule{(\alpha, M) \step{a}{p} (\alpha', M')\\
            \alpha' \not = \bot \\ \epsilon\not\in\alpha'}
{(\alpha\beta, M) \step{a}{p} (\alpha'\beta, M')}\rname{Concat}&&
\end{align*}
\begin{align*}
\inferrule{(\alpha, M) \step{a}{p} (\alpha', M')\\
            \epsilon\in\alpha' }
{(\alpha\beta, M) \step{a}{p} (\beta, M')}\rname{Concat-$\epsilon$}&&
\inferrule{(\alpha, M) \step{a}{p} (\alpha', M')\\
            \alpha' = \bot}
          {(\alpha\beta, M) \step{a}{p} (\bot, \emptyset)}\rname{Concat-$\bot$}
\end{align*}
\begin{align*}
\inferrule{(\alpha_0, M) \step{a}{p} (\alpha'_0, M'_0)\\
           (\alpha_1, M) \step{a}{p} (\alpha'_1, M'_1)\\
           \epsilon \in\alpha'_0 \lor \epsilon\in\alpha'_1}
          {(\alpha_0 + \alpha_1, M) \step{a}{p} (\epsilon, M_0'\mcomp M_1')}\rname{OR-end}
\end{align*}
\begin{align*}
\inferrule{(\alpha_0, M) \step{a}{p} (\alpha'_0, M'_0)\\
           (\alpha_1, M) \step{a}{p} (\alpha'_1, M'_1)\\
           \epsilon \not\in\alpha'_0 \land \epsilon\not\in\alpha'_1 }
           {(\alpha_0 + \alpha_1, M) \step{a}{p} (\alpha'_0 + \alpha'_1, M_0'\mcomp M_1')\\ \alpha'_0 + \alpha'_1 \text{ is reduced w.r.t } \alpha + \bot = \bot + \alpha =\alpha}\rname{OR}
\end{align*}
\begin{align*}
\inferrule{(\beta, M)\step{a}{p}(\beta', M')\\\epsilon \in \beta'}
          {(\iter{\alpha}{\beta}, M) \step{a}{p} (\epsilon, M')}\rname{Iter-end}&&
\inferrule{(\beta, M)\step{a}{p}(\beta', \_)\\
            \epsilon \not \in \beta'\\
            (\alpha, M)\step{a}{p}(\alpha', M')}
          {(\iter{\alpha}{\beta}, M) \step{a}{p} (\alpha'\iter{\alpha}{\beta}, M')}\rname{Iter}
\end{align*}
\begin{align*}
\inferrule{(\alpha, M)\step{a}{p}(\alpha', M')\\ \epsilon \not \in \alpha'}
{(\lab{l}{\alpha}, M) \step{a}{p} (\labm{l}{\alpha'}, \{l\mapsto(p, \bot)\}\mcomp M')}\rname{L-start}&&
\inferrule{(\alpha, M)\step{a}{p}(\alpha', M')\\ \epsilon \not \in \alpha'}
          {(\labm{l}{\alpha}, M) \step{a}{p} (\labm{l}{\alpha'}, M')}\rname{L-cont}&&
\end{align*}
\begin{align*}
\inferrule{(\alpha, M)\step{a}{p}(\alpha', M')\\ \epsilon \in \alpha'}
          {(\lab{l}{\alpha}, M) \step{a}{p} (\epsilon, M'\mcomp \{l\mapsto(p, p)\})}\rname{L-ltr}&&
\inferrule{(\alpha, M)\step{a}{p}(\alpha', M')\\ \epsilon \in \alpha'}
{(\labm{l}{\alpha}, M) \step{a}{p} (\epsilon, M'\mcomp\{l\mapsto(\bot, p)\})}\rname{L-end}
\end{align*}\\
}
\caption{One-step relation for evaluating prefix expressions.
  The rules are evaluated modulo the equalities $\epsilon\cdot\alpha = \alpha\cdot\epsilon = \alpha$.
}
\label{fig:one-step-rules}
\end{figure}

The first seven rules
in Figure~\ref{fig:one-step-rules} are rather standard. Rules for disjunction
are non-standard in the way that whenever an operand of a disjunction is
evaluated to $\epsilon$, the whole disjunction is evaluated to $\epsilon$ (rule
\rnamet{Or-end}) in order to obtain the shortest match.
Also, after rewriting a disjunction, operands that
evaluated to $\bot$ are dropped (unless the last one if we should drop all
operands).
The only non-shortening rule is \rnamet{Iter} that unrolls $\alpha$ if $\beta$ was not matched in $\iter{\alpha}{\beta}$.
Thus, evaluating $\iter{\alpha}{\beta}$ can first prolong the expression and then eventually end up again in $\iter{\alpha}{\beta}$ after a finite number of steps.
This does not introduce any problems as the set of derivations remains bounded~\cite{extended}.

There are four rules for handling labellings. 
Rule \rnamet{L-ltr} handles one-letter matches.
Rule \rnamet{L-start} handles the beginning of a match where the expression $\lab{l}{\alpha}$
gets rewritten to $\labm{l}{\alpha}$ so that we know we are currently matching
label $l$ in the next steps. Rules \rnamet{L-cont} and \rnamet{L-end} continue
and finish the match once the labeled expression evaluates to $\epsilon$.
Concatenating m-maps in the rules works well because of the assumption that no two
sub-expressions have the same label. Therefore, there are no collisions and we
always concatenate information from processing the same sub-expression.

The one-step relation is deterministic, i.e.,~there is always at
most one rule that can be applied and thus every \pe has a unique single derivation for a fixed letter $a$ and position $p$.

\begin{theorem}[Determinism of $\step{a}{p}$] \label{thm:pe:deterministic}
  For an arbitrary \pe $\alpha$ over alphabet $\Sigma$, and any $a\in\Sigma$
  and $p\in\mathbb{N}$, there exist at most one $\alpha'$ such that
  $\alpha \step{a}{p} \alpha'$ (that is, there is at most one defining rule
  of $\step{a}{p}$ that can be applied to $\alpha$).
\end{theorem}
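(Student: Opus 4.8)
The plan is to prove the claim by structural induction on the prefix expression $\alpha$, where the induction also ranges over the two auxiliary forms $\bot$ and $\labm{l}{\alpha'}$ introduced for evaluation. I would first reduce the statement to two structural facts. First, the \emph{outermost shape} of $\alpha$ -- i.e., whether $\alpha$ is $\epsilon$, $\bot$, a single letter, a concatenation, a disjunction, an iteration, a fresh labelling $\lab{l}{\cdot}$, or an ongoing labelling $\labm{l}{\cdot}$ -- selects a disjoint group of candidate rules, since the conclusion of every rule in Figure~\ref{fig:one-step-rules} fixes the outermost shape of its left-hand side, and these shapes are pairwise distinct. Second, within each such group the side conditions are mutually exclusive, so at most one rule can fire; and because each rule computes its result $\alpha'$ (and its m-map) as a function of the sub-derivatives, which are unique by the induction hypothesis, the emitted $\alpha'$ is itself unique.

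Carrying out the shape analysis is then routine in all but one case. The base cases $\epsilon$ and $\bot$ admit only \rnamet{Eps} and \rnamet{Bot}, respectively. For a single letter $a$, the only candidates are \rnamet{Ltr} and \rnamet{Ltr-fail}, whose premises $a=b$ and $a\neq b$ partition the inputs. For a disjunction $\alpha_0+\alpha_1$ the candidates are \rnamet{Or-end} and \rnamet{Or}, separated by the predicate $\epsilon\in\alpha_0' \lor \epsilon\in\alpha_1'$ versus its negation; for an iteration $\iter{\alpha}{\beta}$ the candidates \rnamet{Iter-end} and \rnamet{Iter} are separated by $\epsilon\in\beta'$ versus $\epsilon\notin\beta'$; for $\lab{l}{\alpha}$ the candidates \rnamet{L-start} and \rnamet{L-ltr}, and for $\labm{l}{\alpha}$ the candidates \rnamet{L-cont} and \rnamet{L-end}, are each separated by $\epsilon\notin\alpha'$ versus $\epsilon\in\alpha'$. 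In every case the decision hinges on the predicate $\epsilon\in(\cdot)$ applied to a unique sub-derivative, so the disjointness and exhaustiveness I need follow immediately.

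The concatenation case is the main obstacle, as it is the only one with three competing rules -- \rnamet{Concat}, \rnamet{Concat-$\epsilon$}, and \rnamet{Concat-$\bot$} -- so I must check a genuine three-way partition according to the derivative $\alpha'$ of the left factor. The three premises are, respectively, $\alpha'\neq\bot \wedge \epsilon\notin\alpha'$, then $\epsilon\in\alpha'$, and finally $\alpha'=\bot$. These are exhaustive, and the only possible overlap is between the last two; it is ruled out by the observation that $\epsilon\notin\bot$, which holds because $\bot$ is not among the forms listed in the inductive definition of the $\epsilon\in(\cdot)$ predicate. A second, more delicate point specific to this case is the convention that expressions are read modulo $\epsilon\cdot\alpha=\alpha\cdot\epsilon=\alpha$: I would handle it by fixing the canonical representative in which no factor of a concatenation is $\epsilon$, so that a letter is never silently reparsed as $\epsilon\cdot a$ and the outermost shape (and, for a concatenation, the leftmost factor $\alpha$ playing the role of the reduced subject) is uniquely determined. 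With this normalization the outermost-shape dichotomy of the previous paragraph is well-defined, the three concatenation premises partition the cases, and the induction closes; uniqueness of the resulting pair then follows since each rule determines it from the unique sub-derivatives supplied by the induction hypothesis.
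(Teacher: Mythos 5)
Your proposal is correct and follows essentially the same route as the paper's (two-line) proof sketch: group the rules by the outermost shape of $\alpha$ and observe that, within each group, the premises are pairwise unsatisfiable -- most notably the three-way split for concatenation, where $\epsilon\notin\bot$ rules out the only potential overlap. Your additional care (structural induction to get unique sub-derivatives, and normalizing modulo $\epsilon\cdot\alpha=\alpha\cdot\epsilon=\alpha$ so the outermost shape is well-defined) simply makes explicit what the paper leaves implicit.
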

\begin{proof}[Sketch]
  Multiple rules could be applied only if they match the same structure of $\alpha$
  (e.g., that $\alpha$ is a disjunction). But for such rules, the premises are pairwise
  unsatisfiable together.
\end{proof}

Having the deterministic one-step relation, we can define the
\emph{evaluation function} on words $\delta: \E \times {\Sigma^*} \rightarrow \E
\times \ML$ that returns the rewritten \pe and the m-map
resulting from evaluating the one-step relation for each single letter of the word:
$\delta(\alpha, w) = \overline\delta(\alpha, w, 0, \emptyset)$ where
\vspace*{-1mm}
\[
\overline\delta(\alpha, w, p, M) =
\begin{cases}
  (\alpha, M)           & \text{if } w = \epsilon\\
   \overline\delta(\alpha', w', p+1, M')
                           & \text{if } w = aw' \land (\alpha, M)\step{a}{p}(\alpha', M')\\
\end{cases}
\]



We also define the \emph{decomposition function}
$\rho: \E \times \Sigma^* \rightarrow (\Sigma^* \times \ML \times \Sigma^*) \cup \{\bot\}$
that decomposes a word $w \in \Sigma^*$
into the matched prefix with the resulting m-map, and the rest of $w$:
\[
\rho(\alpha, w) =
\begin{cases}
  (u, m, v) & \text{if }~w = uv \land \delta(\alpha, u) = (\epsilon, m)\\
  \bot & \text{otherwise}
\end{cases}
\]

Function $\rho$ is well-defined as there is at most one $u$ for which
$\delta(\alpha, u) = (\epsilon, \_)$. This follows
from the determinism of the one-step relation.
Function $\rho$ is going to be important in the next section.

There are only finitely many {\pe}s that we can obtain
from a \pe by repeated application of $\step{a}{p}$ for
any $a$ and a fixed $p$.

\begin{theorem}[Finite derivations of {\pe}s]\label{thm:pe:finite_derivs}
  Given a number $p\in\mathbb{N}$ and an arbitrary \pe $\alpha$ over an alphabet $\Sigma$, the set
  \[
  D_p(\alpha) = \{\alpha' \mid \alpha \step{a_0}{p} ... \step{a_k}{p} \alpha', a_1, ..., a_k \in \Sigma\}
  \]
  is finite.
\end{theorem}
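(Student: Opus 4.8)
The plan is to prove finiteness by structural induction on $\alpha$, ranging over the extended grammar that also includes the evaluation-only constructs $\bot$ and $\labm{l}{\cdot}$. Before starting, I would observe that the position $p$ plays no role whatsoever in the \emph{expression} component of a derivation step: inspecting Figure~\ref{fig:one-step-rules}, $p$ occurs only inside m-map updates (e.g.\ $\{l\mapsto(p,\bot)\}$ in \rnamet{L-start}), never in a side condition on an expression nor in a resulting expression. Hence $D_p(\alpha)$ is in fact independent of $p$, and I may treat each step simply as a rewrite on expressions. The base cases are then immediate: for $\alpha\in\{\epsilon,a,\bot\}$ a single application of \rnamet{Eps}, \rnamet{Ltr}/\rnamet{Ltr-fail}, or \rnamet{Bot} leads into the sinks $\epsilon$ and $\bot$, so $D_p(\alpha)$ has at most three elements.

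For the compound cases I would show that every derivative of a compound expression is assembled, in a controlled way, from derivatives of its strict subexpressions, whose derivative sets are finite by the induction hypothesis. For a concatenation, the rules \rnamet{Concat}, \rnamet{Concat-$\epsilon$}, \rnamet{Concat-$\bot$} give $D_p(\alpha_0\alpha_1)\subseteq\{\alpha'_0\alpha_1\mid \alpha'_0\in D_p(\alpha_0)\}\cup D_p(\alpha_1)\cup\{\bot\}$, because the tail $\alpha_1$ stays inert until the head is fully consumed, after which all further derivatives lie in $D_p(\alpha_1)$. For a disjunction, \rnamet{Or} and \rnamet{Or-end} (together with the reduction $\alpha+\bot=\bot+\alpha=\alpha$) keep both operands synchronized under the same letter, so every derivative lies in $\{\alpha'_0+\alpha'_1\mid \alpha'_0\in D_p(\alpha_0),\ \alpha'_1\in D_p(\alpha_1)\}\cup D_p(\alpha_0)\cup D_p(\alpha_1)\cup\{\epsilon,\bot\}$, where the single-operand summands account for derivations that continue after one side has been dropped. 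The labeling cases $\lab{l}{\alpha_0}$ and $\labm{l}{\alpha_0}$ are handled the same way via \rnamet{L-start}, \rnamet{L-cont}, \rnamet{L-ltr}, \rnamet{L-end}: every derivative is $\epsilon$ or of the form $\labm{l}{\alpha'_0}$ with $\alpha'_0\in D_p(\alpha_0)$. In each of these cases finiteness follows from a finite union of products of the (finite) subexpression derivative sets.

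The main obstacle is the iteration case $\iter{\alpha_0}{\beta}$, since \rnamet{Iter} is the only non-shortening rule and naively the expression lengthens at every unrolling. The crux is a short secondary induction on the length of the derivation chain establishing a normal form: every derivative of $\iter{\alpha_0}{\beta}$ is either $\epsilon$, or $\bot$, or of the shape $\alpha'_0\,\iter{\alpha_0}{\beta}$ with $\alpha'_0\in D_p(\alpha_0)$ (the case $\alpha'_0=\epsilon$ recovering $\iter{\alpha_0}{\beta}$ itself). Indeed, from $\iter{\alpha_0}{\beta}$ the rule \rnamet{Iter-end} yields $\epsilon$ while \rnamet{Iter} yields $\alpha'_0\,\iter{\alpha_0}{\beta}$ with $\alpha_0\step{a}{p}\alpha'_0$; and from an expression $\alpha'_0\,\iter{\alpha_0}{\beta}$, read as a concatenation with head $\alpha'_0$ and the fixed, inert tail $\iter{\alpha_0}{\beta}$, the concatenation rules either replace the head by one of its own derivatives (still in $D_p(\alpha_0)$, by \rnamet{Concat}), or discard it entirely and return to the verbatim $\iter{\alpha_0}{\beta}$ (by \rnamet{Concat-$\epsilon$}), or collapse to $\bot$ (by \rnamet{Concat-$\bot$}). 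The decisive point is that \rnamet{Concat-$\epsilon$} throws the spent head away rather than prepending a fresh copy, so no accumulation of nested iterations ever occurs. This confines $D_p(\iter{\alpha_0}{\beta})$ to $\{\alpha'_0\,\iter{\alpha_0}{\beta}\mid \alpha'_0\in D_p(\alpha_0)\}\cup\{\epsilon,\bot\}$, finite by the induction hypothesis on $\alpha_0$. Collecting all cases completes the induction; I expect the only genuinely delicate step to be justifying that deriving the head of $\alpha'_0\,\iter{\alpha_0}{\beta}$ coincides with deriving $\alpha'_0$ as a standalone expression, which rests on the iteration subterm remaining inert as a concatenation tail.
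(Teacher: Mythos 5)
Your proposal is correct, but it takes a genuinely different route from the paper's. The paper argues globally and operationally: it observes that every rule except \rnamet{Iter} is shortening, and then tames iteration by analyzing the left-most iteration subexpression $\xi$ --- since the nesting depth of iterations is finite, unrolling eventually shortens the expression until $\xi$ is again left-most, whence every further derivative is either shorter or already seen. You instead run a structural induction with an explicit closure invariant per constructor, the key lemma being the normal form for the iteration case, $D_p(\iter{\alpha_0}{\beta}) \subseteq \{\alpha'_0\,\iter{\alpha_0}{\beta} \mid \alpha'_0 \in D_p(\alpha_0)\} \cup \{\epsilon,\bot\}$, which holds precisely because \rnamet{Concat-$\epsilon$} discards the spent head rather than prepending a fresh copy --- you correctly identify this as the point where naive unrolling could otherwise accumulate. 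Both arguments rest on the same central insight (only \rnamet{Iter} lengthens, and its unrollings do not stack), but your version buys compositional, explicit cardinality bounds (e.g.\ $|D_p(\iter{\alpha_0}{\beta})| \le |D_p(\alpha_0)| + 2$) and a checkable invariant, and it absorbs nested iterations silently through the induction hypothesis on $\alpha_0$, where the paper needs its ad hoc left-most-$\xi$ analysis; your invariant also reproduces the paper's worked example for $\iter{(\iter{ab}{c})}{c}$ exactly. Your preliminary observation that $p$ never influences the expression component of a step (it occurs only in the emitted m-maps) is correct and makes explicit a fact the paper uses only implicitly. One harmless discrepancy worth noting: under a literal reading of \rnamet{Iter}, a failed head yields the intermediate derivative $\bot\cdot\iter{\alpha_0}{\beta}$ (subsumed by your normal form with $\alpha'_0 = \bot$), whereas the paper's diagram collapses such steps directly to $\bot$; this is an informality of the paper's figure, not a gap in your argument.
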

\begin{proof}[Sketch]
  Every rule except \textsc{Iter} is shortening.
  Application of $\step{a}{p}$ on $\iter{\alpha}{\beta}$ is either $\epsilon$
  or $\alpha'\cdot\iter{\alpha}{\beta}$. If $\alpha$ contains nested iteration,
  the expression can get extended further to the left,
  but because the nesting of iterations must be finite,
  at some point the left-hand side of the left-most iteration,
  let us call this iteration expression $\xi$,
  has no nested iteration and repeatedly applying one-step
  relation leads to shortening the expression until $\xi$ again becomes the left-most
  expression (or the whole expression evaluates to $\bot$).
  From this point on,
  any expression obtained by one-step relation is either shorter than the original one
  or it has been already seen.
\end{proof}

To strengthen the intuition behind the proof sketch above, here are depicted
possible evaluations of an expression over the alphabet $\{a, b, c\}$ with two nested iterations (we do not show the parameter~$p$):

\begin{center}
\begin{tikzpicture}
  \node[anchor=west](1) at (0, 0) {$\iter{(\iter{ab}{c})}{c}$};
  \node[anchor=west](2) at (3, 0) {$b(\iter{ab}{c})\iter{(\iter{ab}{c})}{c}$};
  \node[anchor=west](3) at (7, 0) {$(\iter{ab}{c})\iter{(\iter{ab}{c})}{c}$};
  \node[below=of 1](e) {$\epsilon$};
  \node[below=of 2](b) {$\bot$};
  \draw (1) edge[double, ->] node[yshift=3mm] {$a$}(2);
  \draw (1) edge[double, ->] node[xshift=3mm] {$c$}(e);
  \draw (2) edge[double, ->] node[yshift=3mm] {$b$}(3);
  \draw (3) edge[bend left, double, ->] node[yshift=-3mm] {$a$}(2);
  \draw (3) edge[bend right, double, ->] node[yshift=3mm] {$c$}(1);

  \draw (1) edge[double, ->] node[yshift=3mm] {$b$}(b);
  \draw (2) edge[double, ->] node[xshift=4mm] {$a,c$}(b);
  \draw (3) edge[double, ->, bend left] node[yshift=-3mm] {$b$}(b);
\end{tikzpicture}
\end{center}

Thanks to theorems~\ref{thm:pe:deterministic} and \ref{thm:pe:finite_derivs},
the evaluation function $\delta$ for a fixed \pe $\alpha$ over an alphabet $\Sigma$
can be translated into a unique \emph{prefix-free}~\cite{HanW04} finite state transducer~\cite{VeanesHLMB12}.
The transducer has for states the set $D_p(\alpha)$, and edges are formed as in the example above (notice that the expressions in the example above are exactly the expressions from $D_p(\iter{(\iter{ab}{c})}{c})$), i.e.,~for
any $a\in\Sigma$, there is the transition $\alpha_0 \xrightarrow{a,p/M} \alpha_1$
between two states iff $(\alpha_0, \emptyset) \step{a}{p} (\alpha_1, M)$. The
initial state of a transducer is the state $\alpha$
and the accepting state is $\epsilon$.
A run of the transducer on a word $w = w_0w_1...w_k$ is a sequence of
adjacent edges starting from the initial state such that the guard on $i$-th edge is $w_i, i$:
$$
(\alpha)\xrightarrow{w_0, 0/M_0} (\alpha_1) \xrightarrow{w_1,1/M_1} ...
\xrightarrow{w_{k-1},k-1/M_{k-1}}(\alpha_k)
$$
Further, it holds that $\delta(\alpha, w) = (\alpha_k,
M_0\mcomp...\mcomp M_{k-1})$.
An example of a transducer for
$b(\iter{\lab{l}{ab}}{b})(b+a)$ over the alphabet $\{a, b\}$
is shown in Figure~\ref{fig:delta_transducer}.

\begin{figure}
\begin{center}
\begin{tabular}{c}
\begin{tikzpicture}[
  nd/.style = {fill=gray!5,draw, rounded corners, inner sep=3pt}
]
\node[nd] (0) at (-1, 0)  {\color{black}$b(\iter{\lab{l}{ab}}{b})(b+a)$};
\node[nd] (1) at (3, 0)   {\color{black}$(\iter{\lab{l}{ab}}{b})(b+a)$};
\node[nd] (2) at (7, 0)   {\color{black}$(b+a)$};
\node[nd] (3) at (3, -1.5){\color{black}$[b]_l(\iter{\lab{l}{ab}}{b})(b+a)$};
\node[nd,double, color=istagreen,fill=white]
           (5) at (7, -1.5) {\color{istagreen}$\epsilon$};

\draw[->] ($(0.west) + (-5mm, 0)$)  to (0);
\draw[->] (0) edge node[yshift=3mm]{$b,p/\emptyset$} (1);
\draw[->] (1) edge node[yshift=3mm]{$b,p/\emptyset$} (2);
\draw[->] (1) edge node[ xshift=-13mm]{$a,p/\{l\mapsto(p, \bot)\}$} (3);
\draw[] ($(3.north) + (5mm, 0)$)
          edge[->] node[xshift=13mm]{$b,p/\{l \mapsto (\bot, p) \}$}
        ($(1.south) + (5mm, 0)$);
\draw[->] (2) to node[xshift=4mm]{$*/\emptyset$} (5);

\end{tikzpicture}
\end{tabular}
\end{center}
\caption{Finite transducer for \pe $b(\iter{\lab{l}{ab}}{b})(b+a)$.
         Redundant states and transitions
         were omitted.}
\label{fig:delta_transducer}
\end{figure}
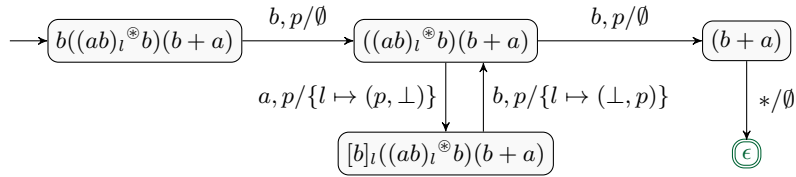

Strictly formally, this transducer as shown in Figure~\ref{fig:delta_transducer}
would be \emph{symbolic} transducer~\cite{VeanesHLMB12} as the position $p$ on edges represents any natural number.
However, $p$ is rather a parameter of the transducer and it only propagates into the output of some edges.
Therefore, it would be more precise to avoid having $p$ in the guard of the edge
and, instead, output functions that take $p$ as the only parameter. For example,
instead of the m-map $\{l \mapsto (p, \bot)\}$,
we would output a map that returns the m-map: $\lambda p.\{l \mapsto (p, \bot)\}$.
The semantics of the output of the transducer then would be modified
to supply and evaluate the outputs on the right sequence of $p$s.
This approach, although formally better, seems unnecessarily complicated
for our purposes.

The fact that {\pe}s correspond to \emph{prefix-free} transducers
suggests how to compose and perform other operations with {\pe}s.

\section{Multi-trace prefix expressions and transducers}\label{sec:mpt}

In this section, we define \emph{multi-trace prefix expressions} and
\emph{multi-trace prefix transducers}.

\subsection{Multi-trace prefix expressions}
A multi-trace prefix expression (\mpe)
matches prefixes of multiple words.
Every \mpe consists of prefix expressions associated to input words and
a \emph{condition} which is a logical formula that must be satisfied by the matched prefixes.

\begin{definition}[Multi-trace prefix expression]
\emph{Multi-trace prefix expression (\mpe)}
over trace variables $V_\tau$ and alphabet $\Sigma$ is a list of pairs together with a formula:
$$(\tau_0, e_0), \ldots, (\tau_k, e_k)[\varphi]$$
where $\tau_i \in V_\tau$ are trace variables and $e_i$ are {\pe}s over the alphabet
$\Sigma$. The formula $\varphi$ is called the \emph{condition} of \mpe.
We require that for all $i \not = j$, labels in $e_i$ and $e_j$ are distinct.
\end{definition}

If the space allows, we typeset {\mpe}s over multiple lines as can be seen
in the examples of prefix transducers throughout the paper.

\mpe conditions are logic formulae over m-strings and input words\footnote{
The conditions can be almost arbitrary formulae,
depending on how much we are willing to pay for their evaluation.
They could even contain nested {\mpe}s. As we will see later, \mpe conditions are evaluated only after matching the prefixes which gives us a lot of freedom in choosing the logic. For the purposes of this work,
however, we use only simple conditions that we need in our examples
and evaluation.
}.
Let $V_\tau$ be a set of trace variables and $L$ a set of labels.
Atomic terms of \mpe conditions are $l$ for each $l \in L$,
$\tau[l]$ for any label $l$ and trace variable $\tau \in V_\tau$,
and constants (words over arbitrary alphabet and m-strings).
%
A well-formed \mpe condition is a boolean formula built up inductively from terms with these rules: 
\begin{itemize}
  \item $t = t'$ is a well-formed condition for $t, t'$ atomic terms
  \item if $C$ is a well-formed condition, $\neg C$ is a well-formed condition
  \item $(C \land C')$ are well-formed conditions for well-formed conditions $C, C'$
\end{itemize}

As usual, we can omit parenthesis using the precedence rules that the equality have
the highest priority, then negation ($\neg$) and then conjunction
($\land$). We write $t \not = t'$ for $\neg(t = t')$ and we use the classical
shortcuts for disjunction and implication: $C \lor C' \equiv \neg(\neg C \land \neg C')$,
and $C \implies C' \equiv \neg C \lor C'$.

For an assignment $\sigma: V_\tau \rightarrow \Sigma^*$ of trace variables to words
and an m-map $M$, we define the evaluation of atomic terms $\teval{\sigma, M}{\cdot}$
in the following way:

\begin{align*}
  \teval{\sigma, M}{c} =&~ c \text{ for any constant } c\\
  \teval{\sigma, M}{l} =&~
  \begin{cases}
    M[l]       & \text{if } l \in Dom(M)\\
    \epsilon   & \text{otherwise}
  \end{cases}\\
  \teval{\sigma, M}{\tau[t]} =&~
  \begin{cases}
    \sigma(\tau)[s_0 .. e_0]\cdot ... \cdot \sigma(\tau)[s_k .. e_k]
                  & \text{if } \teval{\sigma, M}{t} = (s_0, e_0)\cdot ... \cdot(s_k, e_k)\\
                  & \text{and } \forall i. s_i \not = \bot \land e_i \not = \bot\\
    \epsilon      & \text{otherwise}
  \end{cases}\\
\end{align*}

The satisfaction relation $\models$ on \mpe conditions is evaluated w.r.t the
trace assignment $\sigma$ and an m-map~$M$:

\begin{align*}
\sigma, M \models t = t'  \iff &~ \teval{\sigma, M}{t} = \teval{\sigma, M}{t'}\\
\sigma, M \models \neg C  \iff &~ \sigma, M \not \models C\\
\sigma, M \models C \land C' \iff &~ \sigma, M \models C \text{ and } \sigma, M \models C'\\
\end{align*}

\begin{example}
  The condition $l_1 = l_2 \land \tau_1[l_1] \not = \tau_2[l_2]$
gets evaluated to true for $\sigma = \{\tau_1 \mapsto abcba, \tau_2 \mapsto bacab\}$ and
$M = \{l_1,l_2 \mapsto (0, 1)(3,4)\}$, because the terms are evaluated to
$\teval{\sigma, M}{l_1} = \teval{\sigma, M}{l_2} = (0, 1)(3,4)$ and therefore
$\teval{\sigma, M}{\tau_1[l_1]} = abba$ and
$\teval{\sigma, M}{\tau_2[l_2]} = baab$.
\end{example}

Given an \mpe $\alpha$, we denote as $\alpha(\tau)$ the PE associated to
trace variable $\tau$ and if we want to highlight that $\alpha$ has the condition $\varphi$, we write $\alpha[\varphi]$.
We denote the set of all {\mpe}s over trace variables $V_\tau$ and alphabet $\Sigma$ as $\mathit{\mpe}(V_\tau, \Sigma)$.
An \mpe $\alpha[\varphi]$ over trace variables $V_I$ satisfies a trace assignment, written $\sigma \models \alpha$, iff
$\tau \in V_I: \rho(\alpha(\tau), \sigma(\tau)) = (u_\tau, M_\tau, v_\tau)$ and
$\sigma, M_{\tau_1}\mcomp ... \mcomp M_{\tau_k} \models \varphi$ where
    $\{{\tau_1}, ..., {\tau_k}\} = V_I$. That is, $\sigma \models \alpha$ if
every prefix expression for $\alpha$ has matched a prefix and the condition
$\varphi$ is satisfied.




\subsection{Multi-trace prefix transducers}

\emph{Multi-trace prefix transducers (\mpt)} are finite transducers~\cite{VeanesHLMB12} with {\mpe}s as guards on the transitions.
If a transition is taken, one or more symbols are appended to one or more
output words, the state is changed to the target state of the transition and the matched prefixes of input words are consumed. The evaluation then continues matching
new prefixes of the shortened input words.
Combining {\mpe}s with finite state transducers allows to read
input words asynchronously (evaluating {\mpe}s) while having synchronization points
and finite memory (states of the transducer).

\begin{definition}[Multi-trace prefix transducer]
A \emph{multi-trace prefix expression transducer (\mpt)} is a tuple $(V_I, V_O, \Sigma_I, \Sigma_O, Q, q_0, \Delta)$ where
\begin{itemize}
  \item $V_I$ is a finite set of input trace variables
\item $V_O$ is a finite set of output trace variables
\item $\Sigma_I$ is an input alphabet
\item $\Sigma_O$ is an output alphabet
\item $Q$ is a finite non-empty set of states
\item $q_0 \in Q$ is the initial state
\item $\Delta: Q \times \mathit{\mpe}(V_I, \Sigma_I) \times (V_O \hookrightarrow {\Sigma^*_O}) \times Q$
  is the transition relation; we call the partial mappings $(V_O \hookrightarrow {\Sigma^*_O})$ \emph{output assignments}.
\end{itemize}
\label{def:mPE}
\end{definition}

A \emph{run} of an {\mpt} $(V_I, V_O, \Sigma_I, \Sigma_O, L, Q, q_0, \Delta)$
on trace assignment $\sigma_0$ is a sequence 
$\pi = (q_0, \sigma_0) \xrightarrow{\nu_0} (q_1, \sigma_1) \xrightarrow{\nu_1}
\ldots \xrightarrow{\nu_{k-1}} (q_k, \sigma_k)$
of alternating states and trace assignments $(q_i, \sigma_i)$ with output assignments
$\nu_i$,
such that for each $(q_i, \sigma_i) \xrightarrow{\nu_i}(q_{i+1}, \sigma_{i+1})$
there is a transition $(q_i, \alpha, \nu_i, q_{i+1}) \in \Delta$ such that $\sigma_i \models \alpha$ and $\forall\tau \in V_I: \sigma_{i+1}(\tau) = v_\tau$ where
$(\_, \_, v_\tau)=\rho(\alpha(\tau), \sigma_i(\tau))$.
That is, taking a transition in an {\mpt} is conditioned by
satisfying its {\mpe} and its effect is that every matched
prefix is removed from its word and the output assignment is put to output.

The output $O(\pi)$ of the run $\pi$
is the concatenation of the output assignments
$\nu_0 \cdot \nu_1 \cdot ... \cdot \nu_{k-1}$,
where a missing assignment to a trace is considered to be $\epsilon$.
Formally, for any $t\in V_O$, $\nu_i \cdot \nu_j$ takes the value
$$
(\nu_i \cdot \nu_j)(t) =
\begin{cases}
  \nu_i(t)\cdot\nu_j(t) & \text{if } \nu_i(t) \text{ and } \nu_j(t) \text{ are defined}\\
  \nu_i(t)         & \text{if } \nu_i(t) \text{ is defined and } \nu_j(t) \text{ is undefined}\\
  \nu_j(t)         & \text{if } \nu_i(t) \text{ is undefined and } \nu_j(t) \text{ is defined}
\end{cases}
$$

\begin{example} \label{ex:mtpet}


Consider the {\mpt} in Figure~\ref{fig:runs} and
words $t_1 = ababcaba$ and $t_2 = babacbab$ and the assignment
$\sigma = \{\tau_1 \mapsto t_1, \tau_2 \mapsto t_2\}$. The run on this assignment
is depicted in the same figure on the bottom left.
The output of the {\mpt} on $\sigma$ 
is $\bot\top\top\top$.
For any words that are entirely consumed by the {\mpt}
without getting stuck, it holds that $\tau_1$ starts with a sequence
of $ab$'s and $\tau_2$ with a sequence of $ba$'s of the same length.
Then there is one $c$ on both words and the words end with a sequence
of $a$ or $b$ but such that when there is $a$ in one word, there must be $b$
in the other word and vice versa.

\noindent
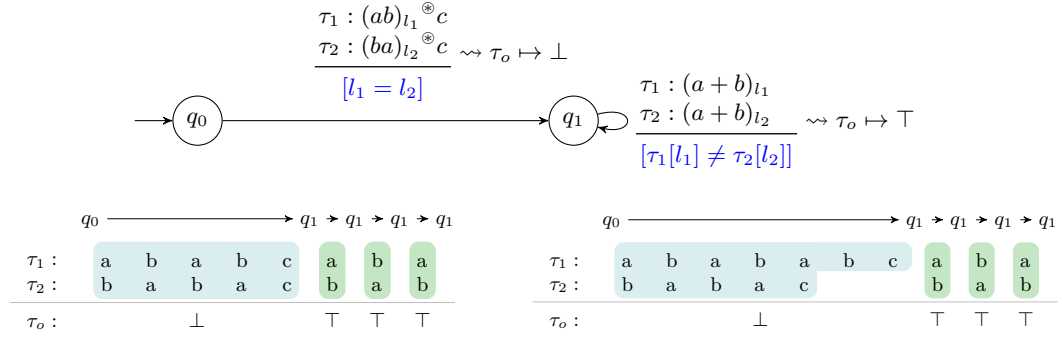
\begin{figure}[t]
\centering
\noindent
\begin{tikzpicture}
  \node[draw,circle, minimum width=6mm] (0) at (0,0) {$q_0$};
  \node[draw,circle, minimum width=6mm] (1) at (5,0) {$q_1$};

  \draw[->] ($(0.west) + (-5mm, 0)$) to (0);
  \draw[->] (0) to node[yshift=9mm,xshift=8mm,align=left]
    {
      \begin{tabular}{l}
      $\tau_1: \iter{\lab{l_1}{ab}}{c}$\\
      $\tau_2: \iter{\lab{l_2}{ba}}{c}$\\
      \midrule
      \multicolumn{1}{c}{\cond{$l_1 = l_2$}}
      \end{tabular}
      $\leadsto \tau_o \mapsto \bot$
    } (1);
    \draw[->] (1) to[loop right] node[yshift=0mm,align=left]
      {
        \small
        \begin{tabular}{l}
        $\tau_1: \lab{l_1}{a + b}$\\
        $\tau_2: \lab{l_2}{a + b}$\\
        \midrule
        \multicolumn{1}{c}{\cond{$\tau_1[l_1] \not = \tau_2[l_2]$}}\\
        \end{tabular}
        $\leadsto \tau_o \mapsto \top$
      } (1);
\end{tikzpicture}
\vspace*{2mm}

\noindent
\begin{minipage}{.41\textwidth}
\begin{center}
\resizebox{\textwidth}{!}
{
\begin{tikzpicture}
  \node[align=left] (t) at  (  -1,0) {\strut $\tau_1:$\\$\tau_2:$};
  \node[align=left] (r1) at (  0,0) {\strut a\\b};
  \node[align=left] (r2) at (0.7,0) {\strut b\\a};
  \node[align=left] (r3) at (1.4,0) {\strut a\\b};
  \node[align=left] (r4) at (2.1,0) {\strut b\\a};
  \node[align=left] (rc) at (2.8,0) {\strut c\\c};
  \node[align=left] (r5) at (3.5,0) {\strut a\\b};
  \node[align=left] (r6) at (4.2,0) {\strut b\\a};
  \node[align=left] (r7) at (4.9,0) {\strut a\\b};

  \draw[color=gray!50] (-1.5, -.5) -- (5.4, -.5);

  \node[align=left] (t0) at (-1, -.8) {\strut$\tau_o:$};
  \node[align=left] (t1) at ( 1.4, -.8) {\strut$\bot$};
  \node[align=left] (t2) at ( 3.5, -.8) {\strut$\top$};
  \node[align=left] (t3) at ( 4.2, -.8) {\strut$\top$};
  \node[align=left] (t4) at ( 4.9, -.8) {\strut$\top$};

  \node[align=left] (q0) at ( -.25, .8) {\strut$q_0$};
  \node[align=left] (q1) at ( 3.15, .8) {\strut$q_1$};
  \node[align=left] (q2) at ( 3.85, .8) {\strut$q_1$};
  \node[align=left] (q3) at ( 4.55, .8) {\strut$q_1$};
  \node[align=left] (q4) at ( 5.25, .8) {\strut$q_1$};

  \draw[->] (q0) to (q1);
  \draw[->] (q1) to (q2);
  \draw[->] (q2) to (q3);
  \draw[->] (q3) to (q4);

  \begin{scope}[on background layer]

  \node[fit=(r1)(rc), inner sep=0, fill=istalightazure, rounded corners]{};

  \node[fit=(r5), inner sep=0, fill=istalightgreen!40, rounded corners]{};
  \node[fit=(r6), inner sep=0, fill=istalightgreen!40, rounded corners]{};
  \node[fit=(r7), inner sep=0, fill=istalightgreen!40, rounded corners]{};
  \end{scope}
\end{tikzpicture}
}
\end{center}
\end{minipage}%
\begin{minipage}{.05\textwidth}
  ~~
\end{minipage}%
\begin{minipage}{.48\textwidth}
\begin{center}
\resizebox{\textwidth}{!}
{
\begin{tikzpicture}
  \node[align=left] (t) at  (  -1,0) {\strut $\tau_1:$\\$\tau_2:$};
  \node[align=left] (r1) at (  0,0) {\strut a\\b};
  \node[align=left] (r2) at (0.7,0) {\strut b\\a};
  \node[align=left] (r3) at (1.4,0) {\strut a\\b};
  \node[align=left] (r4) at (2.1,0) {\strut b\\a};
  \node[align=left] (rc) at (2.8,0) {\strut a\\c};
  \node[align=left] (r5) at (3.5,0) {\strut b\\ };
  \node[align=left] (r6) at (4.2,0) {\strut c\\ };
  \node[align=left] (r7) at (4.9,0) {\strut a\\b};
  \node[align=left] (r8) at (5.6,0) {\strut b\\a};
  \node[align=left] (r9) at (6.3,0) {\strut a\\b};

  \draw[color=gray!50] (-1.5, -.5) -- (6.8, -.5);

  \node[align=left] (q0) at ( -.25, .8) {\strut$q_0$};
  \node[align=left] (q1) at ( 4.55, .8) {\strut$q_1$};
  \node[align=left] (q2) at ( 5.25, .8) {\strut$q_1$};
  \node[align=left] (q3) at ( 5.95, .8) {\strut$q_1$};
  \node[align=left] (q4) at ( 6.65, .8) {\strut$q_1$};

  \node[align=left] (t0) at (-1, -.8) {\strut$\tau_o:$};
  \node[align=left] (t1) at ( 2.1, -.8) {\strut$\bot$};
  \node[align=left] (t2) at ( 4.9, -.8) {\strut$\top$};
  \node[align=left] (t3) at ( 5.6, -.8) {\strut$\top$};
  \node[align=left] (t4) at ( 6.3, -.8) {\strut$\top$};

  \draw[->] (q0) to (q1);
  \draw[->] (q1) to (q2);
  \draw[->] (q2) to (q3);
  \draw[->] (q3) to (q4);

  \begin{scope}[on background layer]

  \node[fit=(r1)(rc), inner sep=0, fill=istalightazure, rounded corners]{};
  \node[inner sep=0, fill=istalightazure, rounded corners,
        minimum height=1.4em, minimum width=2cm] at (3.5, .215) {};

  \node[fit=(r7), inner sep=0, fill=istalightgreen!40, rounded corners]{};
  \node[fit=(r8), inner sep=0, fill=istalightgreen!40, rounded corners]{};
  \node[fit=(r9), inner sep=0, fill=istalightgreen!40, rounded corners]{};

  \end{scope}
\end{tikzpicture}
}
\end{center}
\end{minipage}
\caption{
The \mpt from Example~\ref{ex:mtpet} and a demonstration of its two runs.
Colored regions show parts of words as they are matched by
the transitions, the sequence of passed states is shown above the traces.
}
\label{fig:runs}
\end{figure}

Now assume that the words are
$t_1 = abababcaba$ and $t_2 = babacbab$ with the same assignment.
The situation changes as now the expression on trace $t_1$
matches the prefix $(ab)^3$ while on $t_2$ the prefix $(ba)^2$.
Thus $l_1 = (0, 6) \not = (0, 4) = l_2$ and the match fails.
Finally, assume that we remove the condition $\cond{l_1 = l_2}$
from the first transition. Then for the new words the {\mpt}
matches again and the match is depicted on the bottom right in Figure~\ref{fig:runs}.

\end{example}

In the next section, we work with \emph{deterministic} {\mpt}s.
We say that an \mpt is deterministic if it can take at most one transition
in any situation.
\begin{definition}[Deterministic \mpt]
Let $T = (V_I, V_O, \Sigma_I, \Sigma_O, Q, q_0, \Delta)$ be an {\mpt}.
We say that T is \emph{deterministic (\dmpt)} if for any state $q\in Q$,
and an arbitrary trace assignment $\sigma: V_I \rightarrow \Sigma_I^*$,
if there are multiple transitions $(q, \alpha_1, \nu_1, q_1), ..., (q, \alpha_k, \nu_k, q_k)$ such that $\forall i: \sigma \models \alpha_i$,
it holds that there exists a proper prefix $\eta$ of $\sigma$,
(i.e.,~$\forall \tau\in V_I: \eta(\tau) \le \sigma(\tau)$ and for some $\tau$ $\eta(\tau) <\sigma(\tau)$),
and there exist $i$ such that
$\eta \models \alpha_i$ and $\forall j \not = i: \eta \not \models \alpha_j$.
\end{definition}

Intuitively, an \mpt is \dmpt if whenever there is a trace assignment that
satisfies more than one transition from a state, one of the transitions matches
``earlier'' than any other of those transitions.








\section{Hypertrace transformations} \label{sec:hyper}

A \emph{hyperproperty} is a set of sets of infinite traces.
In this section, we discuss an algorithm for monitoring \emph{\ksafety hyperproperties},
which are those whose violation can be witnessed by at most $k$ finite traces:
\begin{definition}[\ksafety hyperproperty]A hyperproperty $S$ is \ksafety hyperproperty 
  iff
\[
  \forall T \subseteq  \Sigma^\omega: T \not \in S \implies
  \exists M \subseteq \Sigma^*: M \le T \land |M| \le k \land
  (\forall T' \subseteq\Sigma^\omega: M \le T' \implies T' \not \in S)
\]
\end{definition}
where $\Sigma^\omega$ is the set of infinite words over alphabet $\Sigma$,
and $M \le T$ means that each word in $M$ is a (finite) prefix of a word in $T$.

We assume \emph{unbounded parallel input model}~\cite{FinkbeinerHST19},
where there may be arbitrary many traces digested in parallel, and
new traces may be announced at any time.
Our algorithm is basically the combinatorial
algorithm for monitoring hyperproperties of Finkbeiner~et~al.~\cite{FinkbeinerHST17,Hahn19} where we exchange automata generated from HyperLTL
specifications with {\mpt}s.
That is, to monitor a \ksafety hyperproperty, we instantiate an \mpt
for every k-tuple of input traces.
%
An advantage of using {\mpt}s instead of monitor automata in the algorithm of
Finkbeiner~et~al. is that we automatically
get a monitoring solution for asynchronous hyperproperties.
A disadvantage is that we cannot automatically use some of the optimizations designed
for HyperLTL monitors that are based on the structure (e.g., symmetry) of the HyperLTL
formula~\cite{FinkbeinerHST18}.

The presented version of our algorithm assumes that the input is \dmpt
as it is preferable to have deterministic monitors.
Deciding whether a given \mpt is deterministic depends a lot on the chosen logic used
for \mpe constraints. We are currently working on an algorithm that decides
the determinism for {\mpt}s with the equality constraints used in this paper.
In the rest of the paper, we assume that the used {\mpt}s are \emph{known} to be
{\dmpt}s (which is also the case of {\mpt}s used in our evaluation).
We make the remark that in cases where the input \mpt is not known to be
deterministic and/or a check is impractical,
one may resort to a way how to resolve possible non-determinism instead,
such as using priorities on the edges.
This is a completely valid solution and it is easy to modify our algorithm to work
this way. In fact, the algorithm works also with non-deterministic {\mpt}s with a small modification.

\subsection{Algorithm for online monitoring of \ksafety hyperproperties}

Our algorithm is depicted in Algorithm~\ref{alg:combinatorial}
and Algorithm~\ref{alg:aux} (auxiliary procedures).
In essence, the algorithm maintains a set of \emph{configurations} where one
configuration corresponds to the state of evaluation of one edge of an {\dmpt} instance.
Whenever the algorithm may make a progress in some configuration,
it does so and acts according to whether matching the edge succeeds, fails, or needs more events.

\begin{algorithm}
  \DontPrintSemicolon
  \KwData{an \dmpt $(\{\{\tau_1, ...,\tau_k\}, \{\tau_O\}, \Sigma_I, \{\bot, \top\}, Q, q_0, \Delta\})$}
  \KwResult{\emph{false} + witness if an \dmpt instance outputs $\bot$, \emph{true} if no \dmpt instance outputs $\bot$ and there is finitely many traces. The algorithm does not terminate otherwise.}
  \vspace*{.8em}

  $\val{traces} \gets \emptyset$ \tcp*{Stored contents of all traces}
  $\val{onlinetraces} \gets \emptyset$ \tcp*{Traces that are still being extended}
  $\val{workbag} \gets \emptyset$ \tcp*{Sets of configurations to process}
  \;

  \SetKwFunction{Fcfgs}{cfgs}
  \SetKwFunction{Fupdate}{update\_traces}
  \SetKwProg{Proc}{Procedure}{}{}

  \While {$true$}  { \label{alg:loop}
    \Fupdate($\val{workbag}$, $\val{onlinetraces}$, $\val{traces}$)\label{alg:call_update}
    \;
    $\val{workbag'} \gets \emptyset$ \tcp{The new contents of workbag}
    \;
    \ForEach {$C \in \val{workbag}$} { \label{alg:iter_workbag}
      $\val{workbag} \gets \val{workbag} \setminus \{C\}$\label{alg:pop_workbag}\;
      $C' \gets \emptyset$ \tcp{Rewritten set of configurations}\;

      \tcp{Try to move each configuration in the set of configurations}
      \ForEach {c = ($\sigma, (p_1, ..., p_k), M, q \xrightarrow{E\cond{\varphi}\leadsto\nu} q')) \in C$} { \label{alg:foreach_cfg}
      \tcp{Progress on each trace where possible}
      $E', M' \gets E, M$\;
      $(p'_1, ..., p'_k) \gets (p_1, ..., p_k)$\;
      \ForEach {$1 \le i \le k$ s.t. $p_i < |\val{traces}(\sigma(\tau_i))|
           \land E(\tau_i) \not = \epsilon$} { \label{alg:foreach_progress}
        $E' \gets E'[\tau_i \mapsto \xi]$ where
          $E(\tau_i), M' \step{\val{traces}(\sigma(\tau_i))[p_i]}{p_i} \xi, M''$
        \label{alg:step} \;
          $M' \gets M''$\;
          $p'_i \gets p'_i + 1$\;
        \If(\tcp*[f]{Configuration failed}) {$\xi = \bot$}{ \label{alg:failed}
          \KwCont with next configuration (line \ref{alg:foreach_cfg}) \label{alg:continue_cfg}\;
        }
    }
    \uIf (\tcp*[f]{All prefix expressions matched})
      {$\forall j. E'(j) = \epsilon$} { \label{alg:mpe_match}
      \eIf(\tcp*[f]{The condition is satisfied})
      {$\sigma, M' \models \varphi$} { \label{alg:cond}
        \tcp{Compare $p'_1, ..., p'_k$ against other configurations from this set to see if this must be the shortest match}
        \If {$(p'_1, ..., p'_k) < (p''_1, ..., p''_k)$ for any $(p''_1, ..., p''_k)$ of others $c' \in C$} { \label{alg:shortest}
          \If(\tcp*[f]{Violation found}) {$\bot \in \nu$}{ \label{alg:bot}
              \KwRet{\emph{false} + $\sigma$} \label{alg:false}
          }
          \tcp{Edge is matched, no violation found, queue successor edges}
          $\val{workbag'} \gets \val{workbag'} \cup
          \{\Fcfgs(q',
                   (\sigma(\tau_1), ..., \sigma(\tau_k)) ,
                   (p'_1, ..., p'_k)
                 )\}$ \label{alg:queue_succ} \;
          \tcp{This set of configurations is done}
          \KwCont outer-most loop (line \ref{alg:iter_workbag}) \label{alg:continue}\;

        }
      }{
        \KwCont with next configuration (line \ref{alg:foreach_cfg}) \label{alg:continue_cfg2}\;
      } 
    }
    \tcp{Check if the configuration has matched or it can still make a progress,
         and in that case put it back (modified) to the set}
    \If{$E'$ has matched or $\neg\left(\forall 1 \le i \le k: \sigma(\tau_i) \not\in\val{onlinetraces}
      \land p_i = |\val{traces}(\sigma(\tau_i))|\right)$}{\label{alg:cfg_done}
        $C' \gets C' \cup \{(\sigma,
                          (p'_1, ..., p'_k),
                          M',
                           q \xrightarrow{E'\cond{\varphi}\leadsto\nu} q'
                         )\}$ \label{alg:modify_cfg}\;

    }
  }
  \If {$C' \not = \emptyset$ } {
      \tcp{Queue the modified set of configurations}
      $\val{workbag'} \gets \val{workbag'} \cup \{C'\}$
      \label{alg:requeue_workbag}
      \;
      }
  }

    \;
    $\val{workbag} \gets \val{workbag'}$\;
    \If {$\val{workbag} = \emptyset$ and no new trace will appear} {\label{alg:term}
      \KwRet{\emph{true}}\label{alg:true}
    }
  }
\caption{Online algorithm for monitoring hyperproperties with {\mpt}s}
  \label{alg:combinatorial}
\end{algorithm}

\begin{algorithm}
  \DontPrintSemicolon
  \SetKwFunction{Fcfgs}{cfgs}
  \SetKwFunction{Fupdate}{update\_traces}
  \SetKwProg{Proc}{Procedure}{}{}

  \tcp{Auxiliary procedure that returns a set of configurations for outgoing edges of $q$}
  \Proc{\Fcfgs{$q$, $(t_1, ..., t_k)$, $(p_1, ..., p_k)$}}{
    $\sigma \gets \{\tau_i \mapsto t_i \mid 1 \le i \le k \}$\;

    \KwRet \{$(\sigma, (p_1, ..., p_k), (0, ..., 0), \emptyset, e) \mid e $ is an outgoing edge from $q$ \}\;
  }
  \;
  \tcp{Auxiliary procedure to add new traces and update the current ones}
  \Proc{\Fupdate{$\val{workbag}$,
                 $\val{onlinetraces}$,
                 $\val{traces}$}}
  {\label{alg:aux_update}
    \If(\tcp*[f]{Update $\val{traces}$ and $\val{workbag}$ with the new trace})
    {there is a new trace $t$}
    { \label{alg:new_trace_if}
      $\val{onlinetraces} \gets \val{onlinetraces} \cup \{t\}$\;
      $\val{traces} \gets \val{traces}[t \mapsto \epsilon]$\;
      $\val{tuples} \gets
         \{(t_1, ..., t_k) \mid t_j \in Dom(traces), t = t_i \text{ for some } i\}$

      $\val{workbag} \gets
         \val{workbag} \cup
         \{\Fcfgs(q_0, (t_1, ..., t_k), (0, ..., 0))\mid (t_1, ..., t_k) \in tuples
         \}$ \label{alg:init_cfgs}
    }
    \;
    \ForEach(\tcp*[f]{Update traces with new events})
    {$t \in \val{onlinetraces}$ that has a new event $e$}
    {
      $\val{traces}(t) =  \val{traces}(t)\cdot e$\label{alg:trace_update}\;
      \If(\tcp*[f]{Remove finished traces from $\val{onlinetraces}$})
      {$e$ was the last event on $t$} {
        $\val{onlinetraces} \gets \val{onlinetraces} \setminus \{t\}$
        \label{alg:onlinetraces_remove}\;
      }
    }
  }

  \caption{Auxiliary procedures for Algorithm~\ref{alg:combinatorial}}
  \label{alg:aux}
\end{algorithm}

Now we discuss functioning of the algorithm in more detail.
The input is an {\dmpt} $(\{V_I, \{\tau_O\}, \Sigma_I,$ $ \{\bot, \top\}, Q, q_0, \Delta\})$. W.l.o.g we assume that $V_I = \{\tau_1, ..., \tau_k\}$.
The {\dmpt} outputs a sequence of verdicts $\{\top, \bot\}^*$.
A violation of the property is represented by $\bot$, so
whenever $\bot$ appears on the output, the algorithm terminates and reports the violation.
\todo{A simple modification is to terminate on true.}

A configuration is a 4-tuple $(\sigma, (p_1, ..., p_k), M, e)$ where
$\sigma$ is a function that maps trace variables to traces,
the vector $(p_1, ..., p_k)$ keeps track of
reading positions in the input traces, $M$ is the current m-map gathered
while evaluating the \mpe of $e$, and $e$ is the edge that is being evaluated.
More precisely, $e$ is the edge that still needs to be evaluated in the future
as its \mpe gets repeatedly rewritten during the run of the algorithm.
If the edge has \mpe $E$, we write
$E[\tau \mapsto \xi]$ for the \mpe created from $E$ by setting the \pe
for $\tau$ to $\xi$.

The algorithm uses three global variables.
Variable $\val{workbag}$ stores configurations to be processed.
Variable $\val{traces}$ is a map that remembers the so-far-seen contents of all traces.
Whenever a new event arrives on a trace $t$,
we append it to $\val{traces}(t)$. 
Traces on which a new event may still arrive are stored in variable
$\val{onlinetraces}$.
Note that to follow the spirit of online monitoring setup,
in this section, we treat traces as \emph{opaque} objects
that we query for next events.

In each iteration of the main loop (line~\ref{alg:loop}), the algorithm
first calls the procedure \texttt{update\_traces} (line~\ref{alg:call_update},
the procedure is defined in Algorithm~\ref{alg:aux}).
This procedure adds new traces to $\val{onlinetraces}$ and
updates $\val{workbag}$ with new configurations if there are any new traces,
and extends traces in $\val{traces}$ with new events.
The core of the algorithm are lines~\ref{alg:iter_workbag}--\ref{alg:requeue_workbag}
that take all configuration sets and update them with unprocessed events.

The algorithm pops every set of configurations from $\val{workbag}$
(lines~\ref{alg:iter_workbag}--\ref{alg:pop_workbag})
and attempts to make a progress on every configuration in the set (line~\ref{alg:foreach_cfg}).
For each trace where a progress can be made in the current configuration
(line~\ref{alg:foreach_progress}),
i.e., there is an unprocessed
event on the trace $\tau_i$ ($p_i < |\val{traces}(\sigma(\tau_i))|$), and the
corresponding \pe on the edge still has not matched ($E(\tau_i) \not =
\epsilon$), we do a step on this \pe (line~\ref{alg:step}).
The new state of the configuration is aggregated into the primed temporary variables
($E', M', ...$).
If the {\mpe} matches (lines~\ref{alg:mpe_match} and \ref{alg:cond}),
we check if other configurations from the set have progressed enough for us to be
sure that this configuration has matched the shortest prefix (line~\ref{alg:shortest}).
That is, we compare $p'_1, ..., p'_k$ against positions $p''_1, ..., p''_k$ from each other configuration in $C$ if it is strictly smaller
(i.e.,~$p'_i \le p''_i$ for all $i$ and there is $j$ s.t., $p'_j < p''_j$).
If this is true, we can be sure that there is no other edge that
can match a shorter prefix and that has not matched it yet because it was waiting for events.
If this configuration is the shortest match,
the output of the edge is checked if it contains $\bot$ (line~\ref{alg:bot})
and if so, \emph{false} with the counterexample is returned on line~\ref{alg:false}
because the monitored property is violated.
Else, the code falls-through to line~\ref{alg:queue_succ} that queues
new configurations for successor edges as the current edge has been successfully matched
and then continues with a new iteration of the outer-most loop (line~\ref{alg:continue}).
The continue statement has the effect that all other configurations derived from the
same state (other edges) are dropped and therefore progress is made only on the
configuration (edge) that matched.
If any progress on the \mpe can be made in the future, or it has already matched
but we do not know if it is the shortest match yet,
the modified configuration is pushed into the set
of configurations instead of the original one (line~\ref{alg:modify_cfg}).
If not all the configurations from $C$ were dropped because they could not proceed,
line~\ref{alg:requeue_workbag} pushes the modified set of configurations back to
workbag and new iteration starts.

\subsection{Discussion}


To see that the algorithm is correct, let us follow the evolution of the set
of configurations for a single instance of the {\dmpt} on traces $t_1, ..., t_k$.
The initial set of configurations corresponding to outgoing edges from
the initial state is created and put to $\val{workbag}$
exactly once on line~\ref{alg:init_cfgs} in Algorithm~\ref{alg:aux}.
When it is later popped from \emph{workbag} on lines~\ref{alg:iter_workbag}--\ref{alg:pop_workbag} (we are back in Algorithm~\ref{alg:combinatorial}), every configuration (edge) is updated -- a step is taken on every
\pe from the edge's \mpe (lines~\ref{alg:foreach_progress}--\ref{alg:step})
where a step can be made.
If matching the \mpe fails, the configuration is discarded due to the jump on line~\ref{alg:continue_cfg}
or line~\ref{alg:continue_cfg2}.
If matching the \mpe has neither failed nor succeeded (and no violation has been found,
in which case the algorithm would immediately terminate),
the updated configuration is pushed back to $\val{workbag}$ and revisited in later iterations.
If the \mpe has been successfully matched and it is not known to be the shortest match, it is put back
to $\val{workbag}$ and revisited later when other configurations may have proceeded and we
may again check if it is the shortest match or not.
If it is the shortest match, its successor edges are queued to $\val{workbag}$
on line~\ref{alg:queue_succ} (if no violation is found).
This way we incrementally first match the first edge on the run of
the \dmpt on traces $t_1, ..., t_k$ (or find out that no edge matches),
then the second edge after it gets queued into $\val{workbag}$
on line~\ref{alg:queue_succ}, and so on.

The algorithm terminates if the number of traces is bounded.
If it has not terminated because of finding a violation on line~\ref{alg:false},
it will terminate on line~\ref{alg:true}.
To see that the condition on line~\ref{alg:term} will eventually get true if the number
of traces is bounded, it is enough to realize that unless a configuration
gets matched or failed, it is discarded at latest when failing the condition
on line~\ref{alg:cfg_done} after reading entirely (finite) input traces.
Otherwise, if a configuration fails, the set is never put back to $\val{workbag}$
and if it gets matched, it can get back to $\val{workbag}$ repeatedly
only until the shortest match is identified. But if
every event comes in finite time, some of the configurations in the set will eventually
be identified as the shortest match (because the \mpt is deterministic), and the set of configurations will be done.
Therefore, $\val{workbag}$ will eventually become empty.

Worth remark is that if we give up on checking if the matched \mpe is the shortest
match on line~\ref{alg:shortest} (we set the condition to \emph{true})
and on line~\ref{alg:continue}, we continue with the loop
on line~\ref{alg:foreach_cfg} instead of with the outer-most loop, i.e.,
we do not discard the set of configurations upon a successfully taken edge,
the algorithm will work also for generic \emph{non-deterministic} {\mpt}s.

There is another difference between our algorithm and the algorithm of
Finkbeiner~et~al.~\cite{FinkbeinerHST17,FinkbeinerHST19,Hahn19}.
In our algorithm, we assume that existing traces may be extended at any time
until the last event has been seen. 
The algorithm of Finkbeiner~et~al. assumes that when a new trace appears, its
contents is entirely known.
So their algorithm is incremental on the level of traces, while our algorithm
is incremental on the level of traces \emph{and} events.

The monitor templates in the algorithm of Finkbeiner~et~al.
are automata whose edges match propositions on different traces. Therefore, they can
be seen as trivial {\dmpt}s where each prefix expression is a single letter or $\epsilon$.
Realizing this, we could say that our monitoring algorithm is an asynchronous
extension of the algorithm of Finkbeiner~et~al. where we allow to read multiple
letters on edges, or, alternatively, that in the context of monitoring
hyperproperties, {\dmpt}s are a generalization of HyperLTL template automata
to asynchronous settings.

\section{Empirical evaluation}  \label{sec:experiments}

We conducted a set of experiments about monitoring
asynchronous version of OD on random and semi-random traces.
The traces contain input and output events
\texttt{I(t, n)} and \texttt{O(t, n)}
with \texttt{t} $\in \{$\texttt{l}$, $ \texttt{h}$\}$,
and \texttt{n} a 64-bit unsigned number.
Further, a trace can contain
the event \texttt{E} without parameters that abstracts any event
that have occurred in the system, but that is irrelevant to OD.

\begin{figure}[t]
\begin{center}
  \begin{tikzpicture}[
    q/.style = {scale=1, draw, circle, minimum width=7mm, align=left},
    lab-edge/.style = {scale=0.8, align=left}
    ]
  \node[q] (0) at (0,0) {$q_{0}$};
  \node[q] (1) at (5,0) {$q_{1}$};
  \node[q] (2) at (-8.5,0) {$q_{2}$};

  \draw[->] ($(0.south) + (0, -3mm)$) to (0);
  \draw[->] (0) to[loop] node[lab-edge, yshift=10mm, xshift=0mm]
    {
      \begin{tabular}{l}
        $\tau_1: \iter{E}{\lab{e_1}{I_l + O_l}}$\\
        $\tau_2: \iter{E}{\lab{e_2}{I_l + O_l}}$\\
      \midrule
      \multicolumn{1}{c}{\cond{$\tau_1[e_1] = \tau_2[e_2]$}}
      \end{tabular}
    }
    (0);
  \draw[->] (0) to node[lab-edge, yshift=10mm, xshift=4mm]
    {
      \begin{tabular}{l}
        $\tau_1: \iter{E}{\lab{e_1}{O_l + \$}}$\\
        $\tau_2: \iter{E}{\lab{e_2}{O_l + \$}}$\\
      \midrule
      \multicolumn{1}{c}{\cond{$\tau_1[e_1] \not = \tau_2[e_2]$}}
      \end{tabular}
      $\leadsto \tau_o \mapsto \bot$
    }
    (1);
  \draw[->] (0) to node[lab-edge, yshift=16mm, xshift=0mm]
    {
      \begin{tabular}{l}
        $\tau_1: \iter{E}{\lab{e_1}{O_l + I_l + \$}}$\\
        $\tau_2: \iter{E}{\lab{e_2}{O_l + I_l + \$}}$\\
      \midrule
      \multicolumn{1}{c}{\cond{
          \begin{tabular}{l}
          $\tau_1[e_1] \not = \tau_2[e_2]~\land$\\
          $\tau_1[e_1] = O_l \implies \tau_2[e_2] \not \in \{O_l, \$\}~\land$\\
          $\tau_2[e_2] = O_l \implies \tau_1[e_1] \not \in \{O_l, \$\}$
          \end{tabular}
          }
        }
      \end{tabular}
      $\leadsto \tau_o \mapsto \top$
    }
    (2);

\end{tikzpicture}
\end{center}
\caption{The \dmpt used for monitoring asynchronous OD in the experiments.}
\label{fig:mpt_ex}
\end{figure}
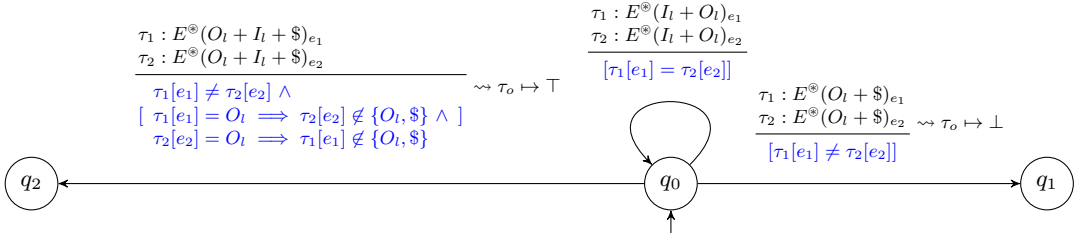

The \dmpt used for monitoring OD is
a modified version of the \dmpt for monitoring OD from Section~\ref{sec:intro}, and is shown in Figure~\ref{fig:mpt_ex}.
The modification makes the \dmpt handle also traces with different number and order of input and output events.
The letter $\$$ represents the end of trace and is automatically appended to the input traces.
We abuse the notation and write $\tau_1[e_1] = O_l$ for the expression that would be formally a disjunction comparing $\tau_1[e_1]$ to all possible constants represented by $O_l$. However, in the implementation, this is a simple constant-time check of the type of the event, identical to checking that an event matches $O_l$ when evaluating prefix expressions.
The term $\tau_i[e_i] \not \in \{O_l, \$\}$ is just a shortcut for 
$\tau_i[e_i] \not = O_l \land \tau_i[e_i] \not = \$$.

The self-loop transition in the \dmpt in Figure~\ref{fig:mpt_ex} has no output and we enabled the algorithm to stop processing traces whenever $\top$ is detected on the output of the transducer
because that means that OD holds for the input traces.
Also, we used the reduction of traces~\cite{FinkbeinerHST18} -- because OD is symmetric and reflexive, then if we evaluate it on the tuple of traces $(t_1, t_2)$, we do not have to evaluate it for $(t_2, t_1)$ and $(t_i, t_i)$.

\begin{figure}[t]
  \centering
  \begin{tabular}{c}
    \includegraphics[width=\textwidth]{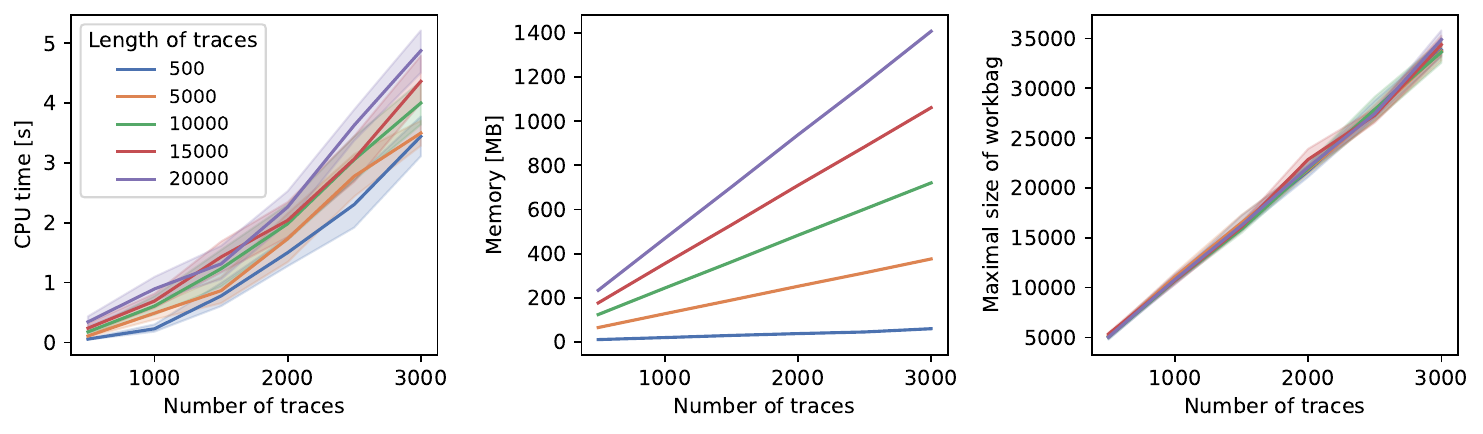}
  \end{tabular}
  \caption{CPU time and maximal memory consumption of monitoring asynchronous OD on random traces with approx. 10\% of low input events and 10\% of low output events. Values are the average of 10 runs.}
  \label{fig:plots1}
\end{figure}

Monitors were implemented in C++ with the help of the framework
\vamos~\cite{ChalupaMLH23}. The experiments were run on a machine
with \emph{AMD EPYC} CPU with the frequency 3.1\,GHz.
An artifact with the implementation of the algorithm and scripts to reproduce the experiments can be found on Zenodo\footnote{
\doi{10.5281/zenodo.8191723}
}.

\subsubsection{Experiments on random traces}
In this experiment, input traces of different lengths were generated such that approx. 10\% were low input and 10\% low output events.
These events always carried the value 0 or 1 to increase the chance
that some traces coincide on inputs and outputs.

Results of this experiment are depicted in Figure~\ref{fig:plots1}.
The left plot shows that the monitor is capable of processing hundreds of traces in a short time and seem to scale well with the number of traces, irrespective of the length of traces.
The memory consumption is depending more on the length of traces as shown in
the middle plot. This is expected as all the input traces are stored in memory.
Finally, the maximal size of the workbag grows linearly with the number of traces
but not with the length of traces, as the right plot shows.

\subsubsection{Experiments on periodic traces}
In this experiment, we generated a single trace that contains low input and output events periodically with fixed distances.
Multiple instances of this trace were used as the input traces.
The goal of the experiment is to see how the monitor performs on traces that must be processed to the very end and if the performance is affected by the layout of events.

The plots in Figure~\ref{fig:plots2} show that
the monitor scales worse than on random traces as it has to always process the traces to their end. For the same reason, the performance
of the monitor depends more on the length of the traces.
Still, it can process hundreds of traces in a reasonable time.
The data do not provide a clear hint on how the distances between events change the runtime, but they do not affect it significantly.
The memory consumption remains unaffected by distances.







\begin{figure}[t]
  \centering
  \begin{tabular}{c}
    \includegraphics[width=\textwidth]{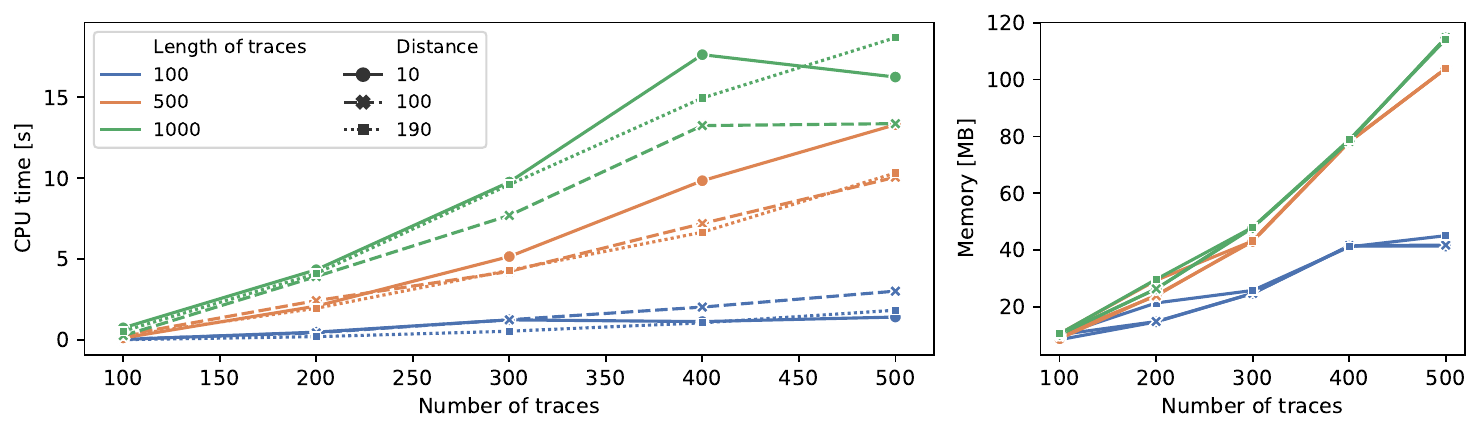}
  \end{tabular}
  \caption{The plot shows CPU time and memory consumption of monitoring asynchronous OD on instances of the same trace with low input and output events laid out periodically with fixed distances. }
  \label{fig:plots2}
\end{figure}

\section{Related Work}\label{sec:related}
In this section, we review the most closely related work.
More exhaustive review can be found in the extended version of this paper~\cite{extended}.

\bigskip
\noindent
\textbf{Logics for hyperproperties}~
Logics for hyperproperties are typically created by extending a logic for trace
properties. Hyperlogics \emph{HyperLTL}~\cite{ClarksonFKMRS14},
\emph{HyperCTL$^*$}~\cite{ClarksonFKMRS14}, and
\emph{HyperQPTL}~\cite{Raabe16} extend \emph{LTL}~\cite{Pnueli77},
\emph{CTL$^*$}~\cite{EmersonH86}, and \emph{QPTL}~\cite{Sistla83}, resp.,
with explicit quantification over traces.
The logic \emph{FO[<,E]}~\cite{Finkbeiner017} and \emph{S1S[E]}~\cite{CoenenFHH19}
are first- and second- order logics with successors
extended with the \emph{equal level predicate}
that relates the same time points on different traces.
\emph{TeamLTL}~\cite{KrebsMV018,VirtemaHFK021} extends LTL with \emph{team semantics},
where teams are sets of traces.
Many other formalism were introduced to describe hyperproperties~\cite{DimitrovaFKRS12,BozzelliMP15,CoenenFHH19,GutsfeldMO20,BonakdarpourS21,BartocciFHNC22,AcetoAAF22}.

All of the hitherto mentioned logics use \emph{synchronous time model}.
\emph{Asynchronous HyperLTL}~\cite{BaumeisterCBFS21},
\emph{Stuttering HyperLTL}~\cite{BozzelliPS21}, and \emph{Context HyperLTL}~\cite{BozzelliPS21}
are extensions of HyperLTL to asynchronous time model.
Gutsfeld~et~al.~\cite{GutsfeldMO21} introduce very expressive
\emph{Multi-tape Alternating Asynchronous Word Automata (AAWA)}
and the temporal fixpoint calculus $H_\mu$ for the specification and analysis
of asynchronous hyperproperties. AAWAs are
so far the only automata-based formalism for specification of asynchronous hyperproperties.
Beutner~et~al. define \emph{HyperATL$^*$}~\cite{BeutnerF21},
an extension of the logic \emph{ATL$^*$}~\cite{AlurHK02} that can capture
asynchronous hyperproperties via quantification over \emph{strategies} of a scheduler.
\emph{Hypernode automata}~\cite{bartocci2023hypernode} introduced by Bartocci~et~al. 
combine finite-state automata with the \emph{hypertrace logic} which allows to describe
properties that have multiple ,,phases''. The hypertrace logic ignores stuttering and prefixing to enable asynchronous time model.

\medskip
\noindent
\textbf{Runtime monitoring of hyperproperties}~
The first paper on runtime monitoring of hyperproperties
is due to Agrawal and Bonakdarpur~\cite{AgrawalB16}.
They consider monitoring \ksafety hyperproperties specified with HyperLTL.
In general, monitoring algorithms for hyperproperties can be classified
as combinatorial or constraint-based~\cite{Hahn19}.
Combinatorial algorithms~\cite{FinkbeinerHST17,FinkbeinerHST18,Hahn19}
construct multiple instances of an monitoring automaton and therefore our
algorithm fall into this category.
Constraint-based algorithms~\cite{BrettSB17,HahnST19,Hahn19,AcetoAAF22}
translate the monitoring task into a set of constraints (e.g., SMT formulae)
and apply rewriting and solving  of the constraints to monitor a given hyperproperty.
%
%

\emph{Stream runtime verification (SRV)}~\cite{Sanchez21} specifies monitoring
as transformation of streams of data, which makes SRV also related to transducers.
It is common that there are multiple input and output streams in an SRV specification,
and
languages like TeSSLa~\cite{LeuckerSS0S18} support
asynchronous time model. So far as we know, no one has used
SRV languages in the context of hyperproperties yet.

\emph{BeepBeep}~\cite{HalleK17} and \vamos~\cite{ChalupaMLH23}
are stream processors designed not only for runtime monitoring.
Both frameworks support an unbounded number of input streams.
\vamos allows their dynamic creation, and as for BeepBeep,
one can use its API to program the same functionality.
Neither of the frameworks was considered in the context of hyperproperties yet
-- we used some data structures from \vamos in the implementation of our monitors,
but we have not used its capabilities as a stream processor.

There are many works on the monitorability and the complexity of monitoring
hyperproperties~\cite{AgrawalB16,BonakdarpourF18,BonakdarpourSS18,FinkbeinerHST19,StuckiSSB19,DamanafshanF22}.

\sidenote{
- combination with static analysis~\cite{BonakdarpourSS18} to tackle the monitorability problem
- reduction to monitoring trace properties~\cite{PinisettySS18}, focused on 2-safety, which includes "1 variants of many interesting security properties including noninteference \cite{ClarksonFKMRS14}, integrity
[10], software doping [5, 13], and data minimality [3]."
- alternation-free fragment of HyperLTL\cite{BrettSB17}, space independent of the number of traces
}

\medskip
\noindent
\textbf{Runtime enforcement}~
\emph{Runtime enforcement}~\cite{Schneider00,FalconeMFR11} is the process of
transforming input streams into output streams such that the output streams
satisfy a given property, usually a security policy.
Runtime enforcement of hyperproperties was explored by Coenen~et~al.~\cite{CoenenFHHS21}.
Prefix transducers also transform input streams to output streams and thus
could be used to enforce (at least some) policies\footnote{
At least on the abstract level, the enforcement has usually also the other part that actually modifies the running system.
}.











\medskip
\noindent
\textbf{Automata and regular expressions}~
\emph{Automata}~\cite{HopcroftU79} and \emph{transducers}~\cite{VeanesHLMB12} are the basis of {\mpt}s and are well explored.
PEs are closely tied with the co-inductive definition of automata~\cite{Rutten98} because
both are based on language derivatives~\cite{Brzozowski64,Antimirov96}.
In fact, the decomposition function $\rho$ can be directly used 
to define the co-algebra of {\pe}s.
\emph{Multi-track automata}~\cite{Bultan2017} are automata that read n-tuples of letters.
They commonly use also a special letter $\lambda$ for a gap (no letter) and thus can describe asynchronous reading of words.
\emph{Regular expressions (RE)}~\cite{HopcroftU79} are an ubiquitous
formalism with many uses and many restrictions/extensions.
\emph{Prefixed regular expressions (PRE)}~\cite{Baeza-YatesG96}
are a subset of REs
with some properties similar to PEs.
Semantically, \emph{prefix-free} REs~\cite{HanWW05} are closer to PEs than PREs,
because PEs give raise to prefix-free languages as follows from~\cite[Lemma 1]{HanW04} 
and the fact that a PE corresponds to a prefix-free transducer~\cite{extended}.
REs with the \emph{shortest-match semantics}~\cite{ClarkeC97}
are very close to PEs, however, unlike PEs, they can be ambiguous.
{\mpt}s with a single input word could be seen as a modification of
\emph{expression automata}~\cite{HanW04}, which are automata with REs on edges.

{\pe}s use binary version of Kleene star.
The ``original'' star defined by Kleene was actually binary~\cite{Kleene,BergstraBP94}.
Kleene, however, did not have restrictions on the right-hand side of the star
as we do and it developed in a short time into the more engaging unary star that people use today~\cite{Copi58}.
\emph{Backreferences} in regular expressions refer to
parts of the word that were already matched~\cite{PennaITZ03,ChidaT22}.
They can be even named~\cite{BerglundM17}, raising more similarities to our labels.
In REs, backreferences bring a great power as they allow
non-regular languages to be matched~\cite{BerglundM17}.
{\pe}s can also recognize some non-regular patterns, however,
labels are much weaker than backreferences because \mpe constraints
are evaluated only a posteriori, while backreferences modify the way how REs are matched.

\todo{unambiguity in formalism}

{
  \footnotesize








}

\section{Conclusion and Future Work}\label{sec:conclusion}

We introduced prefix expressions and multi-trace prefix transducers, a formalism that we see as a natural executable specification
for the monitoring of synchronous and asynchronous hyperproperties.
Prefix expressions are similar to regular expressions, but match only prefixes of words.
The reason why we prefer prefix expressions over regular expressions (that could also be used to match prefixes) is that our
prefix expressions are deterministic and unambiguous.
These properties make evaluating prefix expressions efficient.
The matched prefixes, more precisely their parts that were explicitely labeled, can be then reasoned about using logical formulae, which are a part of multi-trace
prefix expressions that extend prefix expressions to multiple words.
Multi-trace prefix expressions are used as guards on edges in multi-trace prefix transducers, which incrementally match and consume
prefixes of input words and transform them into output words.
Combining prefix expressions with finite state transducers allows us to read input words asynchronously (matching prefix expressions)
with synchronisation points (states of the transducer).
Combining finite state transducers with a ''labeled`` version of some other formalism like LTL or regular expressions would work similarly well if one solves or accepts possible non-determinism and unambiguity of these formalisms.

We use prefix transducers to monitor synchronous and asynchronous \ksafety hyperproperties.
Our experimental evaluation of monitoring asynchronous observational determinism shows that a prefix-transducer-based monitoring
algorithm can scale to thousands of traces.

Prefix transducers provide a flexible formalism for optimizing monitoring algorithms.
We currently implement an asynchronous monitoring algorithm that uses prefix transducers to summarize the seen traces,
similar to the constraint-based algorithms for monitoring synchronous hyperproperties~\cite{Hahn19}.
We also want to analyze the transducers to avoid instantiating them on redundant tuples of traces,
similar to the optimizations for HyperLTL monitors~\cite{FinkbeinerHST18}.
Furthermore, the evaluation of prefix transducers provides many opportunities for parallelization,
ranging from parallelizing the workbag in Algorithm~\ref{alg:combinatorial} to evaluating prefix expressions for different traces in parallel.
Finally, we work on compiling prefix transducers from a high-level logical specification languages for asynchronous hyperproperties,
namely~\cite{bartocci2023hypernode}.
All our implementation work is carried out to extend the \vamos~\cite{ChalupaMLH23} software infrastructure for monitoring.

\sidenote{
We could adapt known formalisms, for example LTL or regular expressions (RE)
to be used instead of prefix expressions on edges of transducers.
Labeling of sub-expressions is not strictly tied to prefix expressions
and we could have also "labeled" versions of LTL or REs.
With these formalisms, however, we need to take care of things like ambiguity,
to make sure that matched labeled substrings are uniquely determined --
or, alternatively, allow non-unique matches and thus introduce more
non-determinism in matching. We chose to use prefix expressions as they
are deterministic and seem to be the right formalism for our needs,
but exploring possibilities of "labeled" REs or LTL could be interesting
future work. The advantage of using REs or LTL would be that these are well-known
formalisms, people know how to use them, and the specifications could be
shorter than with prefix expressions just because the non-determinism.
}

\subsubsection{Acknowledgements} This work was supported in part by the ERC-2020-AdG 101020093. 
The authors would like to thank Ana Oliveira da Costa for commenting on a draft of the paper.

\bibliographystyle{splncs04}
\bibliography{references}


\end{document}